\renewcommand{\algorithmcfname}{ALGORITHM}
\newcommand{\AutoAdjust}[3]{\mathchoice{ \left #1 #2  \right #3}{#1 #2 #3}{#1 #2 #3}{#1 #2 #3} }
\newcommand{\InBrackets}[1]{\AutoAdjust{[}{#1}{]}}
\newcommand{\Ex}[2][]{\operatorname{\mathbf E}_{#1}\InBrackets{#2}}
\newcommand{\fall}[1]{\underbar{r}}
\newcommand{\argmax}{\operatorname{argmax}}
\newcommand{\rounds}{n}
\newcommand{\stage}{k}
\newcommand{\val}{v}
\newcommand{\dist}{F}
\newcommand{\distlh}{{\dist_{[\low,\high]}}}
\newcommand{\distlt}{{\dist_{[\low,t]}}}
\newcommand{\distab}[1][a,b]{{\dist_{[#1]}}}
\newcommand{\ps}{p^{\ast}}
\newcommand{\pslh}{{\ps_{[\low,\high]}}}
\newcommand{\psab}[1][a,b]{{\ps_{[#1]}}}
\newcommand{\Rlh}{{R_{[\low,\high]}}}
\newcommand{\Rab}[1][a,b]{{R_{[#1]}}}
\newcommand{\prej}{p_{20}}
\newcommand{\pacc}{p_{21}}
\newcommand{\prejrej}{p_{300}}
\newcommand{\prejacc}{p_{301}}
\newcommand{\low}{\ell}
\newcommand{\high}{\mathfrak{h}}
\newcommand{\bel}{\mu}
\newcommand{\mstrat}{\sigma}
\newcommand{\mstrats}{\mstrat_s}
\newcommand{\mstratb}{\mstrat_b}
\newcommand{\hist}{h}
\newtheorem{theorem}{Theorem}
\newtheorem{definition}{Definition}
\newtheorem{lemma}{Lemma}
\newtheorem{fact}{Fact}
\newtheorem{proposition}{Proposition}
\newtheorem{remark}{Remark}
\newcommand{\qed}{\mbox{\ \ \ }\rule{6pt}{7pt} \bigskip}
\renewcommand{\comment}[1]{}
\newenvironment{proof}{\noindent{\em Proof:}}{\hfill\qed}
\newenvironment{oneshot}[1]{\@begintheorem{#1}{\unskip}}{\@endtheorem}
\begin{document}
\title{Perfect Bayesian Equilibria in Repeated Sales
\thanks{Part of this work was done while the third author was at Microsoft 
Research. A preliminary version of this work (\cite{DPSSODA15}) appeared in 
the 
Proceedings of the ACM-SIAM Symposium on Discrete Algorithms, 2015.}}
\author{Nikhil R. Devanur\thanks{Microsoft Research. \tt{nikdev@microsoft.com}.}
 \and Yuval Peres\thanks{Microsoft Research. \tt{peres@microsoft.com}.}
 \and Balasubramanian Sivan\thanks{Google Research. \tt{balusivan@google.com}.}
}

\date{}
\maketitle{}
\begin{abstract}
A special case of Myerson's classic result  describes the revenue-optimal 
equilibrium when a seller offers a single item to a buyer. We study a 
{\em repeated sales} extension of this model: a seller offers to sell a single 
{\em fresh copy} of an item to the {\em same buyer every day} via a posted 
price. The buyer's private value for the item is drawn 
initially from a publicly known distribution $F$ and remains the same 
throughout. A key aspect of this game is that the seller might try to learn the buyer’s private value to extract more revenue, while the buyer is motivated to hide it. We study the Perfect Bayesian Equilibria (PBE) in this setting with varying levels of commitment power to 
the seller. We find that the seller having the commitment power to not raise 
prices subsequent to a purchase significantly improves revenue in a PBE.
\end{abstract}

\section{Introduction}
\label{sec:intro}
Most interesting economic games are inherently dynamic and/or repetitive, with
the same sellers repeatedly interacting with the same buyers.  Such scenarios
arise commonly in e-commerce platforms, such as eBay and Amazon, and online
advertising markets, such as Google, Yahoo! and Microsoft, among others. 
Unfortunately, the game-theoretic aspects of such repeated interactions are
poorly understood compared to their static, one-shot counterparts. In this
paper, we develop the theory for one such fundamental setting.

\paragraph{Questions we study.} 
There is a single seller of a certain good (say fish) and a single buyer who enjoys consuming a fresh fish 
every day. The buyer has a private value $v$ for each day's fish, drawn from a publicly known 
distribution. However, this value is drawn {\em only once}, i.e., the buyer has the same
unknown value on all days.  Each day, the seller sets a price for that day's
fish, which of course can depend on what happened on previous days. The buyer
can then decide whether to buy a fish at that price or to reject. The goal  of
the buyer is to maximize his total utility (his value minus price on each day
he buys and 0 on other days), and the goal of the seller is to maximize
profit.  How much money can the seller make in $n$ days in a \emph{Perfect Bayesian Equilibrium 
	(PBE)}? The key point here is that the seller is unable to credibly commit to prices for the future days (we 
refer the reader to Appendix~\ref{app:intro_details} for a gentle introduction to a Perfect Bayesian 
Equulibrium, and the role of commitment in it). We study three different versions of this problem for 
arbitrary distributions: the $n$ rounds version without any commitment, the $n$ rounds version with 
partial commitment, and the
time discounted infinite horizon version with partial commitment. Here are the formal definitions.

\begin{definition} \label{def:basicgame} A 1 seller, 1 buyer {\em Finite
		Horizon Repeated Sales game} is a sequential (extensive form) game between a
	{\em seller} and a {\em buyer}, with $\rounds$ rounds.  In each round, the
	buyer has a private valuation of $\val$ for a {\em perishable} item,  with a
	quasilinear utility.  The value $\val$ is initially drawn from a distribution
	$\dist$ supported on $[\low,\high]$ ($0 \leq \low \leq \high$) and stays the
	same throughout;  the seller only knows $\dist$.  The seller can produce a
	fresh copy of the item in each round, at a publicly known cost (normalized to)
	0.  Each round has two stages: the seller first offers a price for the item,
	and then the buyer responds with an {\em accept} or a {\em reject}.
\end{definition} 

\begin{definition}
	A 1 seller, 1 buyer {\em Finite Horizon Partial Commitment Repeated Sales game}
	is the same as a finite horizon repeated sales game with the additional
	condition that the seller cannot raise prices once a purchase has been made at 
	a certain price. He still holds the
	freedom to lower prices. 
\end{definition}

\begin{definition}
	A 1 seller, 1 buyer {\em Time Discounted Infinite Horizon Partial Commitment Repeated Sales game} with a discount factor $1-\delta$ is a partial commitment repeated sales game that is played forever, with the buyer and the seller discounting their round $i$'s utility by $(1-\delta)^{i-1}$. Equivalently, it
	can be thought of as a partial commitment repeated sales game (without any discounting) whose
	stopping time is a geometrically distributed random variable: the probability that the game stops after any given round is $\delta$. 
\end{definition}

\paragraph{Interpretation as a game with geometric stopping time.} While the
infinite horizon might appear as a mathematical curiosity with little practical
relevance, it is actually the most realistic of the three models. The time
discounted infinite horizon game is exactly equivalent to a game with geometric
stopping time. With a discount factor of $1-\delta$ per round, the $i$-th round
utility is discounted by a factor $(1-\delta)^{i-1}$.  Equivalently, if the
game stops after any given round with probability $\delta$, the probability
that the $i$-th round is reached is $(1-\delta)^{i-1}$, and therefore, any
utility obtained in that round has to be discounted by a factor
$(1-\delta)^{i-1}$. Often, a geometric stopping time is more realistic than a
fixed $n$ day horizon because the buyer or the seller may not be sure of the
precise number of interactions that will take place. 

\paragraph{A revenue upper bound.} 
We already know that if the seller were able to fully commit to all future
prices, he can commit to setting Myerson optimal price for every single day,
thereby getting $n$ times Myerson optimal revenue.  By not committing to
future prices, can the seller get more revenue, or at least as much revenue?
Surprisingly, the seller cannot get any more revenue by not committing --- this 
was shown in~\cite{BB84}. The gist of the 
argument is that if
it were possible to extract more than $n$ times Myerson revenue, then
it would be possible to extract more than Myerson optimal revenue in a single
round game. We present this small and crisp argument in 
Proposition~\ref{thm:MyeOPT} for a very 
general mechanism design 
setting, with arbitrary objectives, arbitrary time discounting, with the number 
of repetitions of the game possibly being a random variable. We refer the 
reader to Appendix~\ref{app:Myerson_optimal} for a formal
description of the model and the proof of the proposition. 

\begin{proposition} (A simple generalization of the result in~\cite{BB84})
	\label{thm:MyeOPT}
	In the general model of repeated mechanism design, the optimal objective value 
	obtained without any commitment is never larger than the optimal objective
	value obtained when commitment is possible. 
\end{proposition}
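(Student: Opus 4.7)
The plan is a direct reduction: show that every perfect Bayesian equilibrium (PBE) of the no-commitment game can be mimicked in the commitment game with the same outcome, so the commitment optimum weakly dominates. Conceptually, adding commitment only enlarges the designer's strategy space (by removing the sequential rationality constraint on the designer's own play), so the max must grow.

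I would first set up uniform notation for both versions. In either game, the designer's pure strategy $\sigma_M$ is a map from histories to (possibly randomized) actions/prices, and the agent's strategy $\sigma_A$ is a map from histories and private type to responses. The equilibrium concepts differ: in the no-commitment game we require a PBE, so both players are sequentially rational with consistent beliefs; in the commitment game, the designer publishes $\sigma_M$ at the start and the agent best-responds to that announcement. In particular, the commitment optimum is the supremum over all pairs $(\sigma_M,\sigma_A)$ with $\sigma_A$ a best response to $\sigma_M$ at the root.

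Second, I would fix an arbitrary PBE $(\sigma_M^{\ast},\sigma_A^{\ast})$ of the no-commitment game. Sequential rationality of the agent implies, as a special case, that $\sigma_A^{\ast}$ is a best response to $\sigma_M^{\ast}$ from the root of the tree — exactly what the commitment game requires once the designer publishes $\sigma_M^{\ast}$. Hence, in the commitment game the designer may commit to $\sigma_M^{\ast}$, the agent may respond with $\sigma_A^{\ast}$, and the induced distribution over play paths (and therefore the expected designer objective) agrees with that of the PBE. Since the commitment optimum is the supremum of feasible objectives and this particular pair is feasible, the commitment value is at least the value of the PBE; taking supremum over PBEs yields the stated inequality.

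The only delicate point is agent tie-breaking: in a PBE the buyer's strategy is only required to be a best response, so there may be multiple best responses to $\sigma_M^{\ast}$, and $\sigma_A^{\ast}$ might not be the tie-breaking rule the commitment-supremum naturally picks. This is handled in the standard Stackelberg fashion, either by reading the commitment value as a true supremum (not a maximum) or by perturbing $\sigma_M^{\ast}$ by $\varepsilon$ to break ties in the designer's favor and then letting $\varepsilon \to 0$. Because the reduction uses no property of the revenue objective, finite horizon, or discount factor — only the game-tree structure and the definition of PBE — it carries over verbatim to the general repeated mechanism design model of Appendix~\ref{app:Myerson_optimal} with arbitrary objectives, arbitrary time discounting, and randomized horizon.
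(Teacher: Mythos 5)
Your argument establishes only the comparative statement that the designer's optimum when he can commit to an arbitrary contingent strategy is at least the value of any PBE of the no-commitment game: commit to the PBE strategy $\sigma_M^{\ast}$, note that $\sigma_A^{\ast}$ remains a best response at the root, and conclude. That inequality is correct (and, as you yourself note, nearly tautological, since commitment only removes the designer's own sequential-rationality constraints), but it is not what the paper proves or needs. In the paper the ``optimal objective value obtained when commitment is possible'' is instantiated concretely: it is $\sum_{t\geq 1} q_t d_t$ times the optimal single-round objective value, i.e., the value of committing to repeat the single-round (Myerson) optimal mechanism in every round; the formal restatement of the Proposition in Appendix~\ref{app:Myerson_optimal} bounds every PBE of the repeated game by exactly this quantity, and that bound is what licenses ``Myerson-optimal revenue every day'' ($n/4$, resp.\ $1/(4\delta)$) as the benchmark used throughout the paper. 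Your proof gives no bound whatsoever on your abstract commitment supremum, so it cannot deliver this conclusion: a priori the Stackelberg value of the repeated game could exceed repeating the single-round optimum, and showing that it does not is precisely the nontrivial content of the statement.

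The missing idea is the paper's reduction to the one-shot problem. Suppose some PBE of the repeated game had expected objective exceeding $\sum_t q_t d_t$ times the single-round optimum. Build a single-round mechanism: solicit types, draw a round $t$ with probability $q_t d_t / \sum_{t'} q_{t'} d_{t'}$, simulate the PBE play up to round $t$, and output round $t$'s outcome. Each agent's interim utility in this mechanism is the correspondingly scaled PBE utility, so a profitable deviation here would translate into a profitable deviation in the PBE (contradicting that it was a PBE); and the designer's expected objective is the scaled repeated objective, which by hypothesis exceeds the single-round optimum --- contradicting single-round optimality. This sampling/averaging step is what turns the proposition into a quantitative upper bound rather than a ``commitment can only help'' truism, and it is entirely absent from your write-up; your tie-breaking discussion is fine but peripheral to this gap.
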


\paragraph{The power of commitment and a revenue benchmark.} The goal of this 
paper is to explore the 
``power of commitment'' from the seller's point of view: to what extent does 
varying levels of commitment impact seller revenue? 
Proposition~\ref{thm:MyeOPT} says that the revenue obtained by a seller with 
full commitment power cannot be exceeded by a seller without commitment power, 
suggesting it as a benchmark for studying the 
power of commitment.  In particular, we seek to study the ratio of seller 
revenue obtained from 
a certain level of commitment to that of revenue from full commitment. This 
ratio quantifies the extent to commitment impacts 
seller revenue. 

The finite horizon repeated sales game (the first version) has been  previously 
considered by
\citet{HT88} and \citet{Sch93}. (See also the survey by \citet{FV06}.)
\citet{HT88} consider the special case where $\dist$ is a 2-point distribution
and \citet{Sch93} generalizes it to any discrete distribution. These papers 
show that for the finite 
horizon 
version of the game, a PBE always
exists, and, every PBE charges the minimum possible price $\low$ on all but the
final few constant number of rounds. For instance, for such a PBE in the 
$n$-rounds repeated sales problem where the buyer's value is $U[0,1]$, the 
seller extracts
only a constant amount of revenue, as opposed to our full-commitment-benchmark 
of $n/4$, namely, $n$ times the
Myerson 1-round revenue of $1/4$. While the benchmark grows linearly with $n$, 
the no 
commitment revenue does not even grow with 
$n$. The question we ask is whether \emph{partial} commitment from the seller 
not to raise prices can significantly improve the situation? A commitment not 
to \emph{raise prices} is observed in several settings: for example annual and 
two-year contracts often offered by Internet and telephone service providers 
can be seen as a guarantee not to raise prices in the future --- they could 
always offer a lower price if that would entice the buyer to purchase. On the 
other hand, a 
guarantee not to \emph{lower prices} is rarely seen in practice, and is also 
much harder to enforce. After all, it may be difficult
for the seller to resist the temptation to lower prices if it will entice the
buyer to purchase. Moreover while increasing the price may be beneficial to the
seller, it is never in the interest of the buyer, but decreasing the price
(when it is higher than the buyer's value) benefits both the buyer and the
seller.

\paragraph{Our results.}

The main message of this paper is that the the seller possessing the power 
to credibly commit not to raise prices can guarantee significantly higher 
revenue, than when he is unable to commit to anything, in very simple PBEs that 
we call threshold PBEs. We prove this via our analysis of settings 2 and 3 
below. For the sake of completeness, we also analyze the well-studied setting 1 
for the existence of threshold PBEs. Our main technical results are (a) the derivation of seller's PBE revenue, and the PBE prices posted for the linear demand case (i.e., 
for the $U[0,1]$ distribution where the demand diminishes linearly with price) in the finite horizon partial commitment game (setting 2 below), and (b) the derivation of equivalence (and its interesting consequences) between infinite horizon partial commitment game (setting 3 below) and the durable goods monopoly / bargaining with one-sided offer setting.
\paragraph{1. Finite horizon with no commitment.} PBEs being complex objects, 
we look for the existence of simple PBEs. One notion of simplicity in PBEs is 
that buyers follow threshold strategies: on each day, the buyer purchases only 
if his value is above a certain threshold (e.g. see~\cite{FT91},~\cite{FV06}). 
Such a pure strategy threshold PBE 
has, among other things, a very simple representation: the strategy and 
the Bayesian updated seller beliefs are simple to represent. We show that such 
pure strategy threshold PBEs do not exist in the finite horizon setting with no 
commitment for any atomless bounded support distribution (supported in $[0, 
\high]$). See Theorem~\ref{thm:nRounds}. 

\newcommand{\thr}{threshold} 

\paragraph{2. Finite horizon with partial commitment.} 
What if the seller has the power to credibly commit not to raise prices upon 
purchase? Please see ``The power of commitment'' discussion in the introduction 
for the motivation for such one-sided commitment. In our second result, we show that with partial commitment from the seller not 
to raise prices upon purchase, \thr\ PBEs are guaranteed to exist for all 
atomless bounded support distributions 
(Theorem~\ref{thm:partialCommitmentPBEExistence}).  For the case where the 
buyer value distribution $\dist$ 
is $U[0,1]$, the
seller's revenue is $\sqrt{\frac{n}{2} + \frac{\log n}{8} + O(1)}$, with a
horizon of $n$ rounds.  (See Theorem~\ref{thm:recursion}.) This is much 
better than the $O(1)$ revenue that we get when there is no commitment.  

\paragraph{3. Time discounted infinite horizon with partial commitment.} 
In our final result we consider the {\em infinite horizon game with time
	discounting}, combined with the power of partial commitment: the game is
repeated forever but time discounting ensures that players' utilities are still
finite, and the seller promises never to raise prices upon purchase. 
\begin{enumerate}[label=\alph*.]
	\item We establish a close connection between the partial commitment repeated 
	sales game and the literature on bargaining with one-sided offer and durable 
	goods monopoly in Theorem~\ref{thm:BargainingRepeated} (see related work 
	Section~\ref{sec:related} for a 
	definition of bargaining and durable goods monopoly settings). We show that for 
	every atomless bounded distribution: for every 
	PBE in a bargaining game (or equivalently the durable goods monopoly setting), 
	there is a corresponding PBE in the repeated sales 
	game with identical utility structure s.t. buyer utility and seller revenue in 
	the 
	repeated sales game are a factor $\frac{1}{\delta}$ larger than in the 
	bargaining game.
	
	\item We use this connection to confirm the famous Coase 
	conjecture (\citet{Coase1972}) in our setting for the linear demand case (i.e., 
	for the $U[0,1]$ distribution) 
	in Theorem~\ref{thm:eq_select}. In more detail: prior 
	work on bargaining and 
	durable goods monopoly
	(\citet{ST83},~\citet{Stokey81}) has 
	established that when $\delta \to 0$, the seller's first round price approaches 
	$0$, and almost the entire mass of buyers accept in the first round itself in 
	all PBEs where the buyer follows a stationary threshold strategy and seller 
	follows a scale-invariant strategy (see Section~\ref{sec:linearDemand} for 
	definitions of stationarity and scale-invariance). We arrive at the same 
	conclusions for our 
	partial commitment 
	repeated sales setting based on our reduction in 
	Theorem~\ref{thm:BargainingRepeated}. We note that although the revenue from 
	Theorem~\ref{thm:eq_select} confirms the Coase conjecture, the revenue of 
	$O(\frac{1}{\sqrt{\delta}})$ obtained from this PBE is quite high, and grows 
	with $\delta$, unlike the revenue from the PBEs in the finite horizon with no 
	commitment case. 
	
	\item When no equilibrium selection 
	is performed (i.e., not focusing on stationary strategies for buyers 
	etc.),~\citet{AD89} prove a 
	folk theorem that \emph{any} revenue between $0$ and Myerson optimal revenue 
	benchmark can be obtained in a PBE in the infinite horizon bargaining/durable 
	goods monopoly 
	setting. Again, our Theorem~\ref{thm:BargainingRepeated} immediately implies 
	that the folk theorem applies to our infinite horizon repeated sales setting 
	(Theorem~\ref{thm:folk}).
	
	\item For the linear demand case (i.e., $U[0,1]$ distribution), 
	\citet{ST83},~\citet{Stokey81} focus on the PBEs where the seller follows 
	scale-invariant strategies. What if we exogenously enforce that the seller 
	follows scale-invariant strategies? I.e., seller's strategy 
	space is restricted to scale-invariant strategies that post a price of $pk$ 
	when his belief is $U[0,k]$ for all $k$, with $p$ being independent of $k$. 
	With this exogenous enforcement, it turns out there is a unique PBE in which 
	the seller can extract at least  $\frac{4}{3+2\sqrt{2}} \sim 69\%$ of Myerson 
	optimal revenue benchmark, unlike the $O(\frac{1}{\sqrt{\delta}})$ predicted by 
	the Coase-conjecture-confirming Theorem~\ref{thm:eq_select} 
	(see Theorem~\ref{thm:infinitepartial} and the following 
	discussion for this result). Further, the price in the first round as $\delta 
	\to 0$ is about 
	0.586, unlike the $\sqrt{\delta}$ as the first round price predicted by 
	Theorem~\ref{thm:eq_select}. 
\end{enumerate}

\subsection{Related Work}
\subsubsection{Closely related work}
\label{sec:related} 
We discuss closely related work here.  See 
Appendix~\ref{app:broader_related_work} for a discussion of broader related 
work. 
\paragraph{Bargaining with one-sided offer.} In bargaining with
one-sided offer, a single seller repeatedly makes price offers
to a single buyer for the sale of a single unit of good, till a sale is made.
The buyer's private value $v$ for the good is drawn from distribution $\dist$
and is publicly known. In each round, the seller posts a price, and the buyer
can either accept or reject the offer the seller made. Once the buyer accepts at
a price, the game ends, and the seller's revenue is that round's price. We
are primarily interested in the infinite horizon bargaining game: the number of 
rounds in the game is unbounded, but the buyer and seller have a common 
discount rate of $1-\delta$
on their utilities, i.e., utilities in round $i$ are scaled by
$(1-\delta)^{i-1}$. Equivalently, one could think of the there being a
$1-\delta$ probability of the game ending after each round.

\paragraph{Durable goods monopoly.} The only difference of a durable goods
monopoly from the bargaining with one-sided offer setting is that instead of a 
single buyer with a
continuous type space, durable goods monopoly has a continuum of infinitesimal
buyers. But apart from that, just like in bargaining, there is a single seller,
and each buyer is interested in consuming exactly one good: once a buyer makes a
purchase the same buyer never purchases again because the good is durable and
lasts forever. The math for bargaining and durable goods monopoly are identical.

\paragraph{Bargaining, durable goods monopoly and Coase conjecture.} In his
analysis of the durable goods monopoly setting,~\cite{Coase1972} discussed
several properties of the equilibria. These include the threshold behavior from
buyers (resulting in higher value types buying earlier), the equilibrium path
exhibiting a decreasing sequence of prices till the buyer accepts to purchase
etc. Perhaps the most dramatic of these is that in a PBE the monopolist's 
profit (which
is same as his revenue since we assumed fixed marginal cost of $0$)
tends to zero when the discount factor $1-\delta$ tends to $1$. This
is surprising because a monopolist, who is by definition without competition,
should expect to extract a non-trivial amount of surplus as profit. The
reasoning is that the monopolist experiences competition from his own future
offers at a lower price --- in particular, the discount factor $1-\delta$
tending to $1$ implies that the seller makes his future offers at very quick
succession. I.e., he is unable to commit that there won't be any reselling in 
the 
future at lower prices. 

\citet{Bulow82} analyzed Coase's conjecture in a finite horizon model and showed
that the monopolist's price in every round (except of course the last round) is
indeed strictly smaller than the one-shot monopolist's price.~\citet{Stokey81}
further verified Coase's conjecture by studying the infinite horizon durable
goods monopolist game, and constructed an equilibrium that is the limit of
unique equilibria in the finite horizon models.~\citet{Gul1986} proved that 
there is a continuum of PBEs for this setting, but that all of them being 
qualitatively equivalent, and confirming Coase's conjecture in that the initial 
price of the seller converges to $0$.~\citet{ST83} analyzed the
infinite horizon model for bargaining with one-sided offer and obtained results
much along the lines of~\citet{Bulow82} and~\citet{Stokey81}, confirming 
Coase's conjecture. The PBE obtained is one where the seller follows a 
scale-invariant strategy, and the buyer follows a threshold strategy 
with a stationary threshold (explained in the proof). Both~\cite{FLT85} 
and~\cite{Gul1986} confirm the Coase conjecture in the infinite horizon 
bargaining game with one-sided offer in the ``gap'' case, where the smallest 
point in the buyer's value distribution is strictly larger than the seller's 
cost. 

\paragraph{Behavior based price discrimination.} 
The particular model of repeated sales we study has been investigated in the 
economics
literature, under the name {\em behavior based price discrimination} (BBPD)
\citep{FV06}.  The motivation there is that firms can offer personalized prices
to consumers based on their past consumption pattern. Such consumption patterns
could be collected in various ways, such as when the consumers use loyalty
cards, or in an online world where the consumer identifies himself by logging
in, or by the use of technology such as cookies.  \citet{FV06} give several
other markets where BBPD is observed,  such as magazine subscriptions and labor
markets.  BBPD is also prevalent in government and corporate  procurement, from
raw materials to IT infrastructure. 

The most closely related work to ours is that of \citet{HT88} and
\citet{Sch93}, and we have already discussed how our work relates to theirs.
Subsequently, many extensions of their models have been studied, such as when
consumer preferences vary over time, a monopolist seller selling multiple
goods, multile sellers selling the same good who try to poach customers from
each other, sellers with multiple versions of the same product, and so on. We
refer the reader to \citet{FV06} for a survey of these results.  Another
closely related paper is that of \citet{CTW12} who consider a repeated sale
game where the buyers have the option of anonymizing themselves at a cost and
analyze the effect of varying this cost on the welfare of the 
buyers.~\cite{KN14} consider a repeated sale setting where in each round two 
item types are auctioned, to a finite number of agents, with an information 
structure where agents know their own valuation but not the other agents' 
valuations. The seller is forced to run a second price auction, but can change 
the reserve dynamically.~\cite{KN14} show the optimality of static reserve 
under some assumptions and design an optimal dynamic mechanism when the 
assumptions fail. 

\paragraph{Follow-up work.}
\cite{ILPT17} study the multiple buyers ($n$ buyers) version of our problem, where there is a single fresh copy of a good sold every day via a common posted price for all 
buyers whose values are all drawn iid from a known distribution $F$. If more 
than one buyer is interested in purchasing at the posted price, 
then the item is awarded to a uniformly random buyer. In this setting, the 
paper shows that the seller can achieve a revenue that is a constant fraction 
of the full commitment benchmark in a threshold strategy PBE. The fact that the 
seller is forced to post a common price for multiple buyers means that the 
ability to use the reveled preference of the buyers is diminished. This is one 
of the key differences from the single buyer setting we study. A different but related line of work on repeated auctions is one where the buyer's value is drawn independently from a distribution in every round (unlike our identical value in all rounds but is drawn once initially from a distribution); see~\cite{AminRs2013, PPPR16, ADH16, MLTZ16, MLTZ18, ADMS18, BMSW18} for more on this line of work. Another line of work on repeated sales with different values across rounds is one where the buyer's value for the good evolves with time/usage: see~\citep{KLN13, CDKS16}.

\paragraph{Organization.} In Section~\ref{sec:prelim}, we formally define the
game, Perfect Bayesian Equilibrium (PBE) and related concepts. In
Section~\ref{sec:zeroCommitment}, we study the finite horizon game with no
commitment. In Section~\ref{sec:coupons}, we study the finite horizon partial
commitment game. In Section~\ref{sec:infiniteHorizon}, we consider the time
discounted infinite horizon game with partial commitment and its connections to 
bargaining and durable goods monopoly. In Section~\ref{sec:conc} we conclude 
with some open problems.

\section{Preliminaries}
\label{sec:prelim}
\paragraph{Bayesian Nash Equilibrium.} The most common notion of equilibrium in a
static game of incomplete information is the Bayesian Nash Equilibrium. A
profile of strategies is a Bayesian Nash Equilibrium (BNE) if for every agent,
given the other agents' strategies, his own strategy maximizes his expected
payoff for each of his type. The expected payoff of an agent is computed using
the agent's beliefs about the private types of other agents, and all the agents'
beliefs are assumed to be consistent with a common prior distribution over all
the private types. 

\paragraph{History.}
The game proceeds over $\rounds$ rounds, and each round consists of two stages:
round $r$ consists of stages $\stage=2r-1$ and $\stage=2r$. At $\stage=2r-1$, the seller sets a
price, and at $\stage=2r$, the buyer reacts with a accept or reject. The history
after $\stage$ stages of play is denoted by $\hist^\stage$, and constitutes the prices
and accept/reject decisions of all stages $k': 0 \leq k' \leq \stage$.

\paragraph{Beliefs and buyer types.}
In our game, since the buyer's type alone is private, the seller alone has a
belief over the buyer's private type. The seller's belief
$\bel(\cdot|\hist^\stage)$
is a probability density function over the buyer's private type. At the 
beginning of the game, the buyer types are assumed to be drawn from a publicly 
known, atomless 
bounded support distribution in $[\low,\high]$.

\paragraph{Strategy spaces.}
The seller's action space is restricted to posting a
non-negative price in every round. Correspondingly, the seller's strategy
$\mstrats(\cdot|\hist^\stage)$ is a function that, for every possible history,
outputs a probability distribution over his available actions (i.e.,
non-negative prices). The buyer's action space is
restricted to accepting or rejecting a price. Correspondingly, the buyer's
strategy $\mstratb(\cdot|\val,\hist^\stage)$ is a function that, for every
possible private value of the buyer and every possible history, outputs a
probability for accepting the item at the posted price. 

\paragraph{Perfect Bayesian Equilibrium.}  Intuitively a Perfect Bayesian
Equilibrium combines the notions of subgame perfect equilibrium (used in dynamic
games of complete information) and Bayesian update of beliefs (used in games of
incomplete information) by requiring that the profile of strategies and beliefs
when applied to the continuation game given any history, form a BNE. It is the
perfection aspect of PBE that makes commitments non-credible/non-binding:
informally, no commitment is credible unless it is a part of a BNE in the
continuation game after every possible history that could precede the stage at
which the commitment becomes effective. We now formally define Perfect Bayesian
Equilibrium for our game, i.e., mention only the restrictions relevant to our
game. 

A profile of strategies
$(\mstrats^*(\cdot|\hist^\stage),\{\mstrat_v^*(\cdot|\val,\hist^\stage)\}_{v})$ 
and
beliefs $\bel(\cdot|\hist^\stage)$ in the repeated-sale game is a Perfect
Bayesian Equilibrium (PBE) when the following conditions are satisfied:
\begin{enumerate}
	\item Bayesian update of seller's beliefs: the seller assumes that the buyer
	plays the PBE strategy $\mstrat_v^*(\cdot|\val, \hist^\stage)$. If there exists 
	some value
	$\val$ in the support of seller's belief $\bel(\cdot|\hist^{\stage-1})$, such 
	that the buyer's action at
	stage $\stage$ has a non-zero probability under his equilibrium strategy
	$\mstrat_v^*(\cdot|\val,\hist^\stage)$ at $\val$, the seller updates his
	belief $\bel(\cdot|\hist^\stage)$ based on Bayes' rule.  There are no
	restrictions on belief updates if the buyer takes an out-of-equilibrium 
	zero-probability action.
	
	\item For every $\stage$ and $\hist^\stage$, the strategies from $\hist^\stage$
	onwards are a BNE for the remaining game. Formally, conditional on reaching
	$\hist^\stage$, let 
	$u_s\left(\mstrats(\cdot|\hist^\stage),\{\mstrat_v(\cdot|v,\hist^\stage)\}_{v},\bel(\cdot|\hist^\stage)\right)$
	denote
	the expected revenue of seller under strategy profile $\mstrat$ (where the
	expectation is over both the randomness in $\mstrat$ and the belief
	$\bel(\cdot|\hist^\stage)$), and let 
	$u_v\left(
	\mstrats(\cdot|\hist^\stage),\{\mstrat_v(\cdot|v,\hist^\stage)\}_{v},\bel(\cdot|\hist^\stage)\right)$
	denote the
	expected utility of the buyer type $v$ under strategy profile $\mstrat$ (where 
	the
	expectation is over the randomness in $\mstrat$). Then,
	\begin{align*}
	& 
	u_s\left(\mstrats^*(\cdot|\hist^\stage),\{\mstrat_v^*(\cdot|v,\hist^\stage)\}_{v},\bel(\cdot|\hist^\stage)\right)\geq
	u_s\left(\mstrats(\cdot|\hist^\stage),\{\mstrat_v^*(\cdot|v,\hist^\stage)\}_{v},\bel(\cdot|\hist^\stage)\right)\qquad
	\forall k, \forall \hist^\stage, \forall \mstrats\\
	&u_v\left(\mstrats^*(\cdot|\hist^\stage),\{\mstrat_v^*(\cdot|v,\hist^\stage)\}_{v},\bel(\cdot|\hist^\stage)\right)\geq
	u_v\left(\mstrats^*(\cdot|\hist^\stage),\mstrat_v(\cdot|v,\hist^\stage),\{\mstrat_x^*(\cdot|x,\hist^\stage)\}_{x\neq
		v},\bel(\cdot|\hist^\stage)\right)\qquad
	\forall v, \forall k, \forall \hist^\stage, \forall \mstrat_v
	\end{align*}
\end{enumerate}

\paragraph{Threshold PBE.} A threshold strategy for the buyer computes an
accept/reject decision as follows: given history $\hist^\stage$, there exists a 
deterministic threshold $t(\hist^\stage) \geq 0$, and buyer type $v$ accepts 
the item if $\val \geq t(\hist^\stage)$ and rejects otherwise.  By definition, a
threshold strategy is a pure strategy. In this paper, we focus on pure-strategy 
threshold 
strategy PBEs, i.e., threshold strategy for the buyer and pure strategy for the 
seller. 

\paragraph{Single buyer vs. continuum of buyers.} As discussed in many prior 
works (e.g. see Section 10.2.1 of~\cite{FT91}), given a distribution of 
types, one can interpret our 
problem as a single seller against a single buyer with a private type drawn 
from this type space, or a continuum of buyer types whose values are given by 
the type distribution. In the latter case, we assume that the seller cannot 
distinguish between buyer types and can simply observe the measures of buyer 
sets that accept or reject. For most of this paper we use the single buyer 
interpretation. 

\paragraph{Simultaneous deviation of several buyer types.} Following the 
convention from previous works (\citet{Gul1986}), we do not specify the 
equilibrium 
behavior following simultaneous deviation by several buyers. In the single 
buyer interpretation, this does not matter because simultaneous deviations are 
not visible to the seller, but even in the continuum of buyers model where it 
could be 
observable if a non-zero measure of buyers deviate we do not specify 
equilibrium behavior following such deviations.

\paragraph{The no-gap case.} We assume that the seller has a fixed marginal 
cost for producing each copy of the good, and (without loss of generality) that 
it is $0$. Prior work has distinguished two important cases of buyer 
distributions as a function of the seller cost: whether there is a gap (the gap 
case) between $\ell$ and $0$ or whether $\ell = 0$ (the no-gap case). We focus 
on the (more natural) no-gap case i.e., $\ell = 0$ in 
Section~\ref{sec:zeroCommitment}. The threshold PBE existence result in 
Section~\ref{sec:coupons} applies for both the gap and no-gap case.  

\paragraph{PBE specification.} As is clear from the definition of a PBE, 
seller's and buyer's strategies should be specified in off-equilibrium 
histories as well, for a strategy profile to constitute a PBE. In the 
theorems we prove, the off-equilibrium behavior can be immediately derived from 
applying the fact that the threshold buyer is indifferent between buying and 
rejecting. Thus we skip the excessively long specification of strategies after 
all off-equilibrium histories. For the $2$ round case alone 
(Section~\ref{subsec:twoRounds}), we specify strategies in complete detail for 
the 
sake 
of clarity. 

\paragraph{Notational convention for thresholds.} We use $t=\infty$ to denote 
the
buyer rejecting to buy at all values. Thus, a threshold $t$ lies in
$[\low,\high] \cup \{\infty\}$. 

\section{Finite horizon with no commitment}
\label{sec:zeroCommitment}
We begin with two simple facts that are useful in analyzing PBEs. Their proofs 
are immediate. 
\begin{fact} (Indifference at threshold)
	\label{lem:IAT}
	In any threshold PBE, the buyer with his value $\val$ equals the threshold $t$  
	is indifferent between accepting and rejecting (except for $t = \infty$). 
\end{fact}

\begin{fact} (Bayesian price update)
	\label{lem:BPU}
	If the buyer accepts in a given round with threshold $t$ all future round 
	prices are at least $t$, and if he rejects in a given
	round with threshold $t$, all future round prices
	are at most $t$.
\end{fact}

\subsection{Two rounds game}
\label{subsec:twoRounds} 
It turns out that for a two rounds game, a
threshold PBE is guaranteed to exist and it is essentially
unique.~\citet{HT88} and~\citet{FV06} characterize the PBE for the two rounds
repeated sales game. For the sake of completeness and for gaining intuition,
and because our main result uses the $2$ rounds PBE (mildly), we
discuss the $2$ rounds PBE now. 

\paragraph{Notation.} 
We begin with some
notation and two quick definitions.  Let $\distab$ denote the distribution on
$\val$ conditioned on the fact that $a \leq \val \leq b$ (and thus $\dist =
\distlh$). Let $\psab$ denote an arbitrary element of $\argmax_p
p(1-\distab(p))$ i.e., the set of all single-round revenue maximizing prices or
the so called {\em monopoly prices} for $\distab$. Let $\ps = \pslh$.  Whenever
the monopoly price is not unique $\psab$ will denote an arbitrary monopoly
price unless specified otherwise.  

\paragraph{Revenue Curve.} The revenue curve $\Rab(p) = p(1-\distab(p))$ at $p$
gives the expected revenue in a single round game obtained by offering a price
$p$ to a buyer whose value is drawn from $\distab$. Let $R(\cdot) = \Rlh(\cdot)$
denote the revenue curve for the distribution $\dist$. 

We prove here a property that any PBE in two rounds game for any atomless
bounded support distribution must satisfy.

In the following lemma, $p_1$ is the price in the first round, $\prej$ and 
$\pacc$ are the
prices in second round, upon buyer's rejection and acceptance respectively in
the first round, given that the first round price is $p_1$.  We will use 
$t(p_1)$ to denote the threshold used by the buyer in first round. An 
application of Fact~\ref{lem:BPU} shows that the price $p_{21}$ in the second 
round is at least $t(p_1)$. A further application of Fact~\ref{lem:IAT} shows 
that $p_{20} = p_1$ because indifference implies that $t(p_1) - p_{20} = 
t(p_1)- p_1$. 

\begin{lemma}\label{thm:twoRounds}
	For any atomless distribution $\dist$ of 
	buyer's value supported in $[\ell = 0,\high]$, every pure strategy threshold 
	PBE of a two rounds repeated sales game
	will have $p_1$ such that $\ell < t(p_1) < \high$.
\end{lemma}
\begin{proof}
	First note that $t(p_1)$ is unique. If not, there will be two thresholds 
	$t(p_1) < t'(p_1)$
	such that $\psab[\low,t(p_1)]= \psab[\low,t'(p_1)] = p_1$. I.e., the virtual 
	values for $\distlt$ and $\distab[\low,t']$, namely $\phi_{\distlt}$ and
	$\phi_{\distab[\low,t']}$ {\em both} become zero at $p_1$.  This is not possible
	because for any $x \leq t$, we have $\distab[\low,t'](x) = 
	\distlt(x)\cdot\alpha$ for
	some $\alpha < 1$. This means, $\phi_{\distab[\low,t']}(x) < \phi_{\distlt}(x)$
	for all $x \leq t$. Therefore both the virtual value functions cannot become
	zero at the same point. 
	
	We will prove the lemma by showing that when $t(p_1) = \low$ or $\high$, the 
	seller's revenue is exactly $R(p^*)$, i.e., the monopolist's single round 
	revenue from $\dist$. We show that the seller can do strictly better than 
	$R(p^*)$ in a two 
	round PBE with $\low < t(p_1) < \high$. The former statement is immediate: when 
	all buyers reject in the first round, the first round revenue is $0$, and the 
	maximum possible revenue from just second round alone is $R(p^*)$. Similarly, 
	when all buyers accept in the first round, the first round price $p_1$ must 
	have been $0$ (otherwise buyers with $v < p_1$ would have incurred 
	negative utility), and here again, all revenue comes from second round, which 
	is at most $R(p^*)$. 
	
	Whenever $\low < t(p_1) < \high$, note that the seller's revenue is exactly 
	$R(\prej) +R(\pacc)$: the buyer buys once when his value 
	exceeds $\prej$ and once more when his value exceeds $\pacc$. Use this to see 
	that if $\low < t(p_1) \leq 
	p^*$, seller's revenue is strictly larger than $R(p^*)$: because $\pacc 
	= p^*_{[t(p_1),\high]} = p^*_{[\low,\high]} = p^*$ (the last but one equality 
	holds for all truncated distributions of the form $\dist_{[x,\high]}$ where $x 
	\leq p^*$: this 
	is true in this case since $t(p_1) \leq p^*$), 
	and 
	$\prej = 
	p^*_{[\low,t(p_1)]}$, which yields a revenue of $R(p^*_{[\low,t(p_1)]}) + 
	R(p^*) > R(p^*)$. This completes the proof. 
	
	There is one detail to fill: is it always 
	possible to find a $p_1$ such that $\low < t(p_1) \leq p^*$?  Since $\ell = 0$, 
	it 
	follows that $p^* > \ell = 0$ and clearly setting $t(p_1) = x$ for any $x$ s.t. 
	$0 < x < p^*$ will do, since the distribution $\dist_{[0,x]}$ will have a 
	non-zero monopoly 
	price, and that will be $p_1$. 
\end{proof}

\paragraph{Solution to the general 2 rounds repeated sales game.} The proof of 
Lemma~\ref{thm:twoRounds} gives us a recipe for constructing a 2 rounds PBE for 
an arbitrary atomless bounded support distribution. As discussed in 
Lemma~\ref{thm:twoRounds}'s proof, the seller's goal is to maximize $R(\prej) 
+R(\pacc)$. Because of Fact~\ref{lem:IAT}, it follows that $p_1 = p_{20}$ in 
any PBE. Thus the seller has to compute a first round price of $z$ such that 
$R(\prej) 
+R(\pacc) = R(z) + R(p^*_{[t(z),1]})$ is maximized. This is basically what is 
captured in the strategies in Algorithms ~\ref{strat:SellerTwoRoundsArbitrary} 
and~\ref{strat:BuyerTwoRoundsArbitrary}. Note that $p_1 > p^*$ cannot be in 
the 
equilibrium path: if $p_1 > p^*$, the equilibrium behavior will be for all 
buyers to reject; if on the contrary we had $t(p_1) < \high$, the price 
$p_{20}$ in 
the second round is the monopoly price of the 
distribution $\dist_{[\ell,t(p_1)]}$ which is strictly lesser than $p^*$, 
thereby making the threshold buyer non-best-responding in this case).

For concreteness, we discuss the special case of $U[0,1]$ in 
Appendix~\ref{app:twoRounds}.

\IncMargin{1em}
\begin{algorithm}[!h]
	\SetKwData{Left}{left}\SetKwData{This}{this}\SetKwData{Up}{up}
	\SetKwFunction{Union}{Union}\SetKwFunction{FindCompress}{FindCompress}
	\SetKwInOut{Input}{Input}\SetKwInOut{Output}{Output}
	\textbf{\underline{Round-$1$ strategy:}}\\
	Let $t(x)$ be such that $p^*_{[\low,t(x)]} = x$\\
	Set $p_1 = \argmax_{z \leq p^*} \{R(z) + R(p^*_{[t(z),1]})\}$\\
	\textbf{\underline{Round-$2$ strategy:}} \\
	\uIf{$p_1 \leq p^*$}
	{
		\uIf{\text{Buyer rejects in round $1$}}
		{
			Set second round price of $p_2 = p_1$
		}
		\ElseIf{\text{Buyer accepts in round $1$}}
		{
			Set second round price of $p^*_{[t(p_1),1]}$
		}
	}
	\ElseIf{$p_1 > p^*$}
	{
		\uIf{\text{Buyer rejects in round $1$}}
		{
			Set second round price of $p_2 = p^*$
		}
		\ElseIf{\text{Buyer accepts in round $1$}}
		{
			Set second round price of $\high$
		}
	}
	\caption{Seller's strategy in the general $2$ rounds game}
	\label{strat:SellerTwoRoundsArbitrary}
\end{algorithm}
\DecMargin{1em}
\IncMargin{1em}
\begin{algorithm}[!h]
	\SetKwData{Left}{left}\SetKwData{This}{this}\SetKwData{Up}{up}
	\SetKwFunction{Union}{Union}\SetKwFunction{FindCompress}{FindCompress}
	\SetKwInOut{Input}{Input}\SetKwInOut{Output}{Output}
	\textbf{\underline{Round-$1$ strategy:}}\\
	\uIf{$p_1 \leq p^*$}
	{
		Let $t(p_1)$ be such that $p^*_{[\low,t(p_1)]} = p_1$\\
		\uIf{$v \geq t(p_1)$}
		{
			Accept
		}
		\Else
		{
			Reject
		}
	}
	\ElseIf{$p_1 > p^*$}
	{
		Reject
	}
	\textbf{\underline{Round-$2$ strategy:}} \\
	\uIf{$v \geq p_2$}
	{
		Accept
	}
	\Else
	{
		Reject
	}
	\caption{Buyer's strategy in the general $2$ rounds game}
	\label{strat:BuyerTwoRoundsArbitrary}
\end{algorithm}
\DecMargin{1em}

\subsection{General $n$ rounds game.} We now move to the main result of this
section: in a $n$ rounds repeated sales game for $n > 2$, a pure strategy 
threshold PBE never exists. 
\begin{theorem}\label{thm:nRounds}
	For any atomless distribution $\dist$ of 
	buyer's value supported in $[\ell = 0,\high]$ and for any $n > 2$, a pure 
	strategy threshold PBE never exists in a $n$ rounds repeated 
	sales game. 
\end{theorem}

\begin{proof}
	Consider the three rounds case first. Let $p_1$ be the first round price. Let
	$t = t(p_1)$ be the corresponding buyer threshold in the first round.  Note 
	that a PBE requries 
	that given any history, the strategies for the continuation game must be 
	mutually best responding. Consider one such history where the first round price 
	is $p_1 > \low$ (we are not fixing on the equilibrium $p_1$, but an arbitrary 
	first round price $p_1$ --- recall that for every first round $p_1$ a PBE must 
	specify 
	equilibrium behavior in the continuation game). We show in three cases that 
	irrespective of what value $t(p_1)$ takes, we cannot have a threshold PBE with 
	$p_1 > \ell$. 
	
	\textbf{Case 1: When $p_1 > \ell$, $t(p_1) = \ell$ is not possible in PBE.} 
	Clearly 
	this is 
	not possible as the buyer with value $\ell$ gets negative utility in the first 
	round, and $0$ utility in the future rounds since price is guaranteed to be at 
	least $\ell$ in future rounds. 
	
	\textbf{Case 2: When $p_1 > \ell$, $\ell < t(p_1) < \high$ is not possible in 
		PBE.} 
	We show that the threshold buyer 
	is never indifferent between accepting and rejecting in first round, violating 
	Fact~\ref{lem:IAT}. Begin by noting that $t(p_1) \geq p_1$. 
	\begin{enumerate}
		\item If $t(p_1) = p_1$, the 
		threshold buyer makes $0$ utility upon acceptance, where as by rejecting he can 
		make a non-negative 
		utility as he will be the largest point in the support of the seller beliefs 
		after 
		rejection, namely $F_{[\low,t]}$. 
		\item The only remaining case is that $p_1 < t(p_1) 
		< \high$. In this case, the threshold buyer upon accepting the price $p_1$ in 
		the first 
		round cannot get any further utility in the future rounds, as the prices are at 
		least $t(p_1)$ in the future given seller beliefs are $[t(p_1),\high]$ after 
		first round. Thus his total utility is $t(p_1)-p_1$ upon 
		accepting in the first round.  We show that by rejecting in the first round, 
		the threshold buyer obtains strictly larger utility which is a 
		contradiction. Let $\prej$ be the price in the second round on rejection, and 
		let $\prejrej$ and $\prejacc$ denote the price in the third round upon $(reject,
		reject)$ and $(reject, accept)$ respectively in the first two rounds.  When a
		buyer with value $t$ rejects in the first round, and accepts in the second and
		third rounds, he gets a utility of $(t(p_1)-\prej) + (t(p_1)-\prejacc)$. The 
		two claims
		below show that $\prej \leq p_1$ and $\prejacc < t(p_1)$. Therefore the sum
		$(t(p_1)-\prej) + (t(p_1)-\prejacc)$ is strictly larger than $t(p_1)-p_1$. 
	\end{enumerate}
	
	\paragraph{Claim 1: $\prej \leq p_1$.} On the contrary suppose that $\prej >
	p_1$. Consider a buyer value with $\val$ s.t.
	$p_1 < \val < \prej$. Such a buyer gets zero utility upon rejection in the
	first round because all prices  after rejection are strictly larger than his
	value $\val$ (because the second round price $\prej > \val$ by our choice of
	$\val$, and a threshold PBE for the remaining two rounds game implies that 
	$\prej 
	= \prejrej \leq \prejacc$. The equality follows from applying 
	Fact~\ref{lem:IAT} to the threshold buyer in second round, and inequality 
	follows from noting that the third round price of seller is at least second 
	round threshold which is at least second round price of $p_{20}$). Where as
	upon acceptance, he would have a gotten a strictly positive utility of $\val -
	p_1$.  This says that $\val$ accepts in first round, i.e., $v \geq t(p_1)$. But 
	this is a contradiction because $\val < \prej$ by our choice of $\val$, and,  
	$\prej
	\leq t(p_1)$ because the belief after first round is $F_{[\low,t(p_1)]}$. 
	
	\paragraph{Claim 2: $\prejacc < t(p_1)$.} We show that all prices after
	rejection in first round, namely, $\prej,
	\prejrej, \prejacc$ are strictly smaller than $t(p_1)$.  Even if the largest 
	among 
	these prices, namely
	$\prejacc$, was equal to $t(p_1)$, that would not be a PBE. To see this, 
	consider
	the threshold $t'$ used by the buyer for the distribution 
	$\dist_{[\ell,t(p_1)]}$ in the two round continuation game following the first 
	round price of $p_1$. By
	Lemma~\ref{thm:twoRounds} such a threshold $t'$ is strictly smaller than the 
	largest point in the support $t(p_1)$.
	Note that $\prejacc$ is simply be the monopoly price for the
	distribution $\distab[t',t(p_1)]$. If $t' < t(p_1)$, this monopoly price 
	$\prejacc$ cannot 
	be $t(p_1)$ because that yields a $0$ revenue which is not optimal. Thus 
	$\prejacc < t(p_1)$. 
	
	\textbf{Case 3: When $p_1 > \ell$, $t(p_1) = \high$ is not possible in PBE.} For
	a price $p_1 > \ell$ to be rejected by all buyer types in the first round (i.e.,
	for having $t(p_1) = \high$), we need $\prej \leq p_1$ (for otherwise a buyer
	with value $\low < p_1 < \val < \prej$ would not be best responding by rejecting
	in the first round). I.e., for a PBE to exist, for every $p_1 > \ell$, there
	should be a two round threshold PBE for $\dist$ with a first round price $\prej
	\leq p_1$ for every $p_1 > \ell$. But when $p_{20}$ grows arbitrarily close to
	$0$, the revenue in the last two rounds game becomes arbitrarily close to the
	last round revenue --- on the other hand, from the proof of
	Lemma~\ref{thm:twoRounds}, we know that the seller in a two rounds PBE can
	always get a revenue strictly better than and bounded away from the single round
	revenue, and hence having $p_{20}$ arbitrarily close to $0$ cannot be a part of
	a PBE.
	
	\paragraph{Three rounds to $n$ rounds.} If a three rounds threshold PBE cannot 
	exist, neither can an $n$-rounds threshold PBE.  
\end{proof}

\section{Finite horizon with partial commitment}
\label{sec:coupons}
While a threshold PBE never exists when there is no commitment from the 
seller's side, things
change dramatically if we allow partial commitment.  We show that by having the 
partial commitment from the seller of not raising prices upon purchase, a pure 
strategy threshold PBE is guaranteed to exist for all distributions (both the 
gap-case and no-gap case, namely $\ell$ is not necessarily $0$). 

\paragraph{Numbering convention.} For this section alone we change our 
convention for numbering rounds compared to what we used in previous sections: 
the price, and threshold in the first round of the $\rounds$
rounds game are denoted by $p_\rounds$ and $t_\rounds$ (in earlier sections we
used $p_1$ and $t_1$ for the first round). Similarly the second round's
corresponding quantities are $p_{n-1}$ and $t_{n-1}$ and so on.  

\begin{theorem}
	\label{thm:partialCommitmentPBEExistence}
	For any atomless distribution $\dist$ of 
	buyer's value supported in $[\ell, \high]$, and for any $n$, a pure 
	strategy threshold PBE always exists in a $n$ rounds partial commitment 
	repeated 
	sales game. 
\end{theorem}
\begin{proof}
	We prove by induction on the number of rounds $r$ that a pure strategy 
	threshold PBE is guaranteed 
	to exist for $\dist_{[\ell,x]}$ for all $\ell \leq x \leq \high$. 
	
	\underline{Base case:} When $r=1$, a pure strategy threshold 
	PBE trivially exists for all $\dist_{[\ell,x]}$: the seller posts the monopoly 
	price of the distribution 
	$\dist_{[\ell,x]}$, and buyers with values at least the monopoly price of 
	$\dist_{[\ell,x]}$ accept that price. 
	
	\underline{Inductive hypothesis:} Assume that for all $r \leq n-1$ a threshold 
	PBE exists for $\dist_{[\ell,x]}$ for all $\ell \leq x \leq \high$ in a $r$ 
	rounds partial commitment game. 
	
	\underline{Inductive step:} Consider $r=n$. Let $p_{r,[a,b]}$ denote the 
	first 
	round PBE price in a $r$-rounds game for distribution $\dist_{[a,b]}$ (if 
	more than one PBE exists, fix an arbitrary one). We use $p_r$ to denote 
	$p_{r,[\low,\high]}$.
	
	\paragraph{Price remains fixed upon acceptance.} We begin by showing that when 
	a buyer 
	accepts a price of $p_n$ in the first round, the price remains $p_n$ for the 
	remaining $n-1$ rounds. Clearly by the definition of the game, the price cannot 
	increase beyond $p_n$. Also, the price will not decrease below $p_n$ because, 
	if $t_n$ is the threshold used by the buyer in the first round ($t_n 
	\geq p_n$), the future belief of the seller will be $[t_n,\high]$, i.e., the 
	smallest 
	point $t_n$ in the support of the distribution  is larger than $p_n$, 
	and therefore $p_n$ will be accepted in all future rounds. 
	
	\paragraph{Continuation game has a PBE after first round price of $p_n$.} We 
	now show that for each value of the first round price $p_n$, the continuation 
	game has a PBE. Let $t_n(p_n)$ (abbreviated to just $t_n$) be the buyer's 
	threshold while facing 
	a price of $p_n$. Indifference at threshold (Fact~\ref{lem:IAT}) implies that
	\begin{align}
	\label{eqn:nthRoundIndifference}
	n(t_n - p_n) &= k\cdot(t_n - p_{k,[\ell,t_n]}) 
	\end{align}
	Here the LHS in~\eqref{eqn:nthRoundIndifference} is the utility of the 
	threshold 
	buyer upon accepting a price of $p_n$ in the first round, and the RHS is his 
	utility upon 
	rejecting $p_n$ in the first round. The LHS utility is clear: once a price of 
	$p_n$ is accepted, it remains so for all the remaining rounds, giving a utility 
	of $n(t_n - p_n)$. For the RHS utility, note that after a threshold of $t_n$ in 
	the first round, the seller's belief is updated to $\dist_{[\ell,t_n]}$ after 
	rejection. The PBE of $\dist_{[\ell,t_n]}$ could possibly have several 
	``wasted'' rounds where all buyers reject, and finally at some round there is a 
	non-trivial threshold: this is why we have a $k$ in the RHS instead of $n-1$, 
	and $k \in \{1,2,\dots,n-1\}$. 
	
	As a sanity check, note that we require $t_n \geq p_n$, and this is indeed 
	immediate from~\eqref{eqn:nthRoundIndifference} because $p_{k,[\ell,t_n]} \leq 
	t_n$. The seller now maximizes his revenue $R_n$ from n rounds as
	
	\begin{align}
	\label{eqn:nRoundRevMax}
	R_n = \max_{p_n} \bigg\{R_{n-1,[\ell,t_n]}\dist(t_n) + np_n(1 - 
	\dist(t_n))\bigg\}
	\end{align}
	
	Given that $t_n$ can be obtained as a function of $p_n$ 
	using~\eqref{eqn:nthRoundIndifference} and that $R_{n-1,[\ell,t_n]}$ is well 
	defined by induction to be the PBE revenue in $n-1$ rounds for the distribution 
	$\dist_{[\ell,t_n]}$, it follows that the seller has a well-defined 
	maximization problem in~\eqref{eqn:nRoundRevMax}, and picks the price $p_n$ to 
	maximize his revenue. 
	
	This proves the existence of a PBE in $n$ rounds for the 
	distribution $\dist$. An identical argument will show the existence of $n$ 
	rounds PBE for $\dist_{[\ell,x]}$ for all $\ell \leq x \leq \high$. 
\end{proof}

\paragraph{The $U[0,1]$ case and uniqueness.} It is not possible to prove 
uniqueness at the level of all atomless distributions. For the sake of 
concreteness, and to illustrate the difference partial commitment can make, we 
focus on the $U[0,1]$ distribution and compute a threshold PBE that obtains a 
revenue of $\sqrt{\frac{n}{2}+\frac{\log n}{8}+O(1)}$ in a $n$ rounds game, and 
also establish its 
uniqueness. In comparison, the no commitment $n$ rounds game does not even have 
a threshold PBE. 

\begin{theorem}\label{thm:recursion}
	For the $U[0,1]$ distribution, the $n$ rounds partial commitment repeated sales 
	game has a unique\footnote{
		the uniqueness here refers to the equilibrium path and ignores irrelevant 
		multiplicities that arise in off-equilibrium path. Note that there are trivial 
		ways to have multiplicity in the off equilibrium 
		path. For instance, when the seller posts a price of $0$ in a round (an 
		off-equilibrium path), the only 
		possible PBE behavior in the continuation game is for all buyers to accept the 
		price. If some buyer doesn't, that constitutes a zero probability action and 
		the seller is free to update beliefs arbitrarily. At this node, any probability 
		distribution over $[0,1]$ as belief can still support the equilibrium path 
		behavior discussed in the theorem. We ignore these trivial and irrelevant 
		multiplicities in 
		off-equilibrium path.}
	pure strategy threshold PBE that obtains a revenue of 
	$\sqrt{\frac{n}{2}+\frac{\log n}{8}+O(1)}$. 
\end{theorem}

\begin{proof}
	We prove by induction on the number of rounds $r$ that for each $x$ in 
	$(0,1]$, the distribution $\dist_{[0,x]}$ satisfies the following:
	\begin{enumerate}
		\item the $r$ rounds partial commitment game on $\dist_{[0,x]}$ has a pure 
		strategy threshold PBE ;
		\item the threshold PBE is unique;
		\item the PBE threshold $t_r$ of the buyer in the first round of the $r$ rounds 
		game is non-trivial (i.e., not 
		at the end-points of the support), namely, $0 < t_r < x$.  
	\end{enumerate}
	
	\underline{Base case:} When $r=1$, a pure strategy threshold 
	PBE trivially exists for all $\dist_{[0,x]}$: the seller posts the monopoly 
	price $x/2$, and buyers with values at least $x/2$ accept that price. 
	Uniqueness is obvious from the uniqueness of $\argmax_{p}p(x-p)$. The threshold 
	of $x/2$ is non-trivial, i.e., $0 < x/2 < x$.
	
	\underline{Inductive hypothesis:} Assume that for all $r \leq n-1$ a threshold 
	PBE exists for $\dist_{[0,x]}$ for all $0 < x \leq 1$ in a $r$ 
	rounds partial commitment game, and that it is unique, with a non-trivial 
	threshold $t_r$ in the first round. 
	
	\underline{Inductive step:} Consider $r=n$. Let $p_{r,[a,b]}$ denote the 
	first 
	round PBE price in a $r$-rounds game for distribution $\dist_{[a,b]}$. We use 
	$p_r$ to denote 
	$p_{r,[0,1]}$.
	
	\paragraph{Price remains fixed upon acceptance.} Identical to our proof in 
	Theorem~\ref{thm:partialCommitmentPBEExistence}, it holds that when 
	a buyer accepts a price of $p_n$ in the first round, the price remains $p_n$ 
	for the remaining $n-1$ rounds. 
	
	\paragraph{Continuation game has a PBE after first round price of $p_n$.} Just 
	like in Theorem~\ref{thm:partialCommitmentPBEExistence}, applying indifference 
	at threshold (Fact~\ref{lem:IAT}) implies that
	\begin{align}
	\label{eqn:nthRoundIndifferenceUniform}
	n(t_n - p_n) &= (n-1)\cdot(t_n - p_{n-1,[0,t_n]}) 
	\end{align}
	The only difference from~\eqref{eqn:nthRoundIndifference} is that instead of an 
	arbitrary number $k \leq n-1$, the RHS now has exactly $n-1$: this is because 
	by inductive hypothesis, we have that the threshold is non-trivial in the first 
	round of an $n-1$ round game. So for the $U[0,t_n]$ distribution that is left 
	after rejecting in the first round, a non-trivial 
	threshold implies that at least the largest point in the support of the 
	distribution, namely $t_n$, buys in the second round, in which case his utility 
	is $(n-1)\cdot(t_n - p_{n-1,[0,t_n]})$. 
	
	To simplify this further, note that $U[0,t_n]$ is simply a scaled version of 
	$U[0,1]$. Thus, $p_{n-1,[0,t_n]} = t_n\cdot p_{n-1,[0,1]} = t_n\cdot p_{n-1}$. 
	Thus, we can rewrite~\eqref{eqn:nthRoundIndifferenceUniform} as
	
	\begin{align}
	\label{eqn:nthRoundIndifferenceUniformRewrite}
	n(t_n - p_n) &= (n-1)t_n\cdot(1 - p_{n-1}) 
	\end{align}
	
	Let $u_n$ be the PBE utility of the agent with value $1$ in a $n$ rounds game. 
	Clearly,
	\begin{align}\label{eqn:un}
	u_n = n(1-p_n).
	\end{align}
	By this definition of $u_n$, 
	equation~\eqref{eqn:nthRoundIndifferenceUniformRewrite} can be rewritten as 
	\begin{align}\label{eqn:tn}
	n(t_n - p_n) = u_{n-1}t_n. 
	\end{align}
	
	The seller's revenue, much like~\eqref{eqn:nRoundRevMax} can be written as
	
	\begin{align}
	\label{eqn:rn}
	R_n &= \max_{p_n}\bigg\{R_{n-1, [0,t_n]}t_n + (1-t_n)\cdot np_n\bigg\} 
	\nonumber\\ 
	&= \max_{p_n}\bigg\{R_{n-1}t_n^2 + (1-t_n)\cdot np_n\bigg\}. 
	\end{align}
	
	We now have a four variable recurrence in 
	$\{u_n, t_n, p_n, R_n\}$ to solve, given by 
	equations~\eqref{eqn:un},~\eqref{eqn:tn},~\eqref{eqn:rn}. Substituting for 
	$p_n$ from equation~\eqref{eqn:tn} into
	equation~\eqref{eqn:rn} we have
	\begin{align}
	\label{eqn:rnOptimize}
	R_n = \max_{t_n}\bigg\{R_{n-1}t_n^2 + (1-t_n)t_n(n-u_{n-1})\bigg\} . 
	\end{align}
	This is an expression for revenue that the seller has to maximize. Notice that
	$R_{n-1}$ and $u_{n-1}$ are fixed quantities that are not to be optimized:
	these are quantities for the $n-1$ rounds game for which we assume by induction
	that there is a unique threshold PBE and hence revenue, utilities etc. are
	fixed. The only quantity to optimize in this expression is $t_n$. This
	expression for $R_n$ is maximized at $t_n =
	\frac{n-u_{n-1}}{2(n-u_{n-1}-R_{n-1})}$. Substituting this value of $t_n$ into equation~\eqref{eqn:tn},
	we get 
	\begin{align}
	\label{eqn:pnFinal}
	p_n &= \frac{(n-u_{n-1})^2}{2n(n-u_{n-1}-R_{n-1})} .
	\end{align}
	Similarly, substituting $t_n$ into equation~\eqref{eqn:rnOptimize}, we get
	\begin{align}
	\label{eqn:rnFinal}
	R_n &= \frac{(n-u_{n-1})^2}{4(n-u_{n-1}-R_{n-1})}.
	\end{align}
	From equations~\eqref{eqn:pnFinal} and~\eqref{eqn:rnFinal} it is easy to verify 
	that 
	\begin{align}
	\label{eqn:Rnpn}
	R_n = \frac{np_n}{2}.
	\end{align}
	Using~\eqref{eqn:Rnpn}, and combining equations~\eqref{eqn:un} 
	and~\eqref{eqn:rnFinal}, we eliminate three out of four variables
	to get
	\begin{align*}
	R_n = \frac{(1+2R_{n-1})^2}{4(1+R_{n-1})}.
	\end{align*}
	To analyze this recursion, substitute $V_n = R_n+1$. This yields 
	\begin{align*}
	V_n &= 1+
	\frac{(2V_{n-1}-1)^2}{4V_{n-1}}
	= V_{n-1} + \frac{1}{4V_{n-1}}\\
	\Rightarrow V_n^2 &= V_{n-1}^2 + \frac{1}{16V_{n-1}^2} + 1/2
	\end{align*} 
	To get a precise expression
	for $V_n$, we add the differences of $V_i^2 - V_{i-1}^2$. 
	\begin{align}
	\label{eqn:Vn}
	\sum_{k=2}^{n} V_k^2 - V_{k-1}^2 &= \frac{n-1}{2} + 
	\sum_{k=1}^{n-1}\frac{1}{16V_{k-1}^2}\nonumber\\
	\Rightarrow V_n^2 - V_1^2  &= \frac{n-1}{2} + 
	\sum_{k=1}^{n-1}\frac{1}{16V_{k-1}^2}
	\end{align}
	Note that $V_1^2 = (R_1+1)^2 = 25/16$. This, coupled with~\eqref{eqn:Vn} shows 
	that the higher order term of $V_n$ is $\sqrt{\frac{n}{2}}$. We use $V_n \sim 
	\sqrt{\frac{n}{2}}$ to substitute $\frac{1}{8(k-1)}$ for the fractional 
	$\frac{1}{16V_{k-1}^2}$ term 
	in 
	the summation. Thus, rewriting~\eqref{eqn:Vn}, we get 
	\begin{align*}
	V_n^2 &\sim \frac{n-1}{2} + \frac{H_{n-1}}{8} + \frac{25}{16} \\
	&\sim \frac{n}{2} + \frac{\log n}{8} + O(1)\\
	\Rightarrow R_n &= (V_n-1) \sim \sqrt{\frac{n}{2}+\frac{\log n}{8}+O(1)}\\
	\Rightarrow p_n &= \frac{2R_n}{n} \sim \sqrt{\frac{2}{n}}\qquad 
	(\text{by Equation~\eqref{eqn:Rnpn}})\\
	\Rightarrow t_n &\sim 1 - \frac{1}{\sqrt{2n}} \qquad 
	(\text{by Equation~\eqref{eqn:tn}})
	\end{align*}
	
	Note that $t_n < 1$ for all $n$, i.e., the threshold is non-trivial for all $n$.
	Or equivalently, the fact that $R_n$ strictly increases with $n$ already shows
	that there is a non-zero amount of trade at every round in a PBE.
	
	\paragraph{Uniqueness.} Since we have assumed uniqueness for the $n-1$ round 
	PBE for all $\dist_{[0,x]}$, uniqueness for the $n$ round game follows from 
	simply the uniqueness of solutions to the set of 
	equations~\eqref{eqn:un},~\eqref{eqn:tn},~\eqref{eqn:rn}. Its summary, given by 
	equation~\eqref{eqn:rnOptimize} shows that $R_n$ has a quadratic dependence on 
	the optimization variable $t_n$: clearly the maximum is unique. The second 
	order conditions in~\eqref{eqn:rnOptimize} can be easily verified to show that 
	optima is indeed a maxima. 
\end{proof}

\begin{remark}
	Interestingly, although the price starts very low, at $p_n \sim
	\sqrt{\frac{2}{n}}$, the threshold starts very high at $t_n \sim
	1-\frac{1}{\sqrt{2n}}$. That is, the seller already starts with a very small
	price, and the buyer still refuses to buy for most of his values, waiting for
	the price to go down even further. 
\end{remark}

\section{Time discounted infinite horizon}
\label{sec:infiniteHorizon}
We formally make the connection to bargaining with one-sided information (and 
equivalently to durable goods monopoly) in this section. 
\subsection{Bargaining and Repeated Sales}
\paragraph{Bargaining with one-sided offer.} Recall that in bargaining with
one-sided offer, a single seller repeatedly makes price offers
to a single buyer for the sale of a single unit of good, till a sale is made.
The buyer's private value $v$ for the good is drawn from distribution $\dist$
and is publicly known. In each round, the seller posts a price, and the buyer
can either accept or reject the offer the seller made. Once the buyer accepts at
a price, the game ends, and the seller's revenue is that round's price. We
consider the infinite horizon bargaining game: the number of rounds in the game
is unbounded, but the buyer and seller have a common discount rate of $1-\delta$
on their utilities, i.e., utilities in round $i$ are scaled by
$(1-\delta)^{i-1}$. Equivalently, one could think of the there being a
$1-\delta$ probability of the game ending after each round.

\begin{theorem}
	\label{thm:BargainingRepeated}
	For atomless bounded support distribution $\dist$, for every pure strategy 
	threshold PBE in the bargaining game with one-sided 
	offer for $\dist$, there 
	exists a corresponding pure strategy threshold PBE 
	in the time discounted infinite horizon partial commitment repeated sales game 
	for $\dist$, s.t., if the seller's expected revenue and buyer's expected 
	utility are respectively 
	$R, U(v)$ for type $v$ in the bargaining 
	game, they are $\frac{R}{\delta}, \frac{U(v)}{\delta}$ in the repeated sales 
	game.
\end{theorem}
\begin{proof}
	Consider a PBE $(\sigma_s^*,\sigma_b^*, \mu^*)$ for the bargaining game where 
	the $\sigma$'s are the strategies of the seller and buyer, and $\mu$ is the 
	seller's beliefs. The strategy profile $(\sigma_s^\dagger,\sigma_b^\dagger, 
	\mu^\dagger)$ is a PBE for the repeated sales game where
	\begin{enumerate}
		\item $\forall k$, if $h^k = \text{Reject}^k$, set $\bigg \{\begin{array}{ll}
		\sigma_s^\dagger(\cdot|h^k) &= \sigma_s^*(\cdot|h^k) \\
		\sigma_b^\dagger(\cdot|v,h^k) &= \sigma_b^*(\cdot|v,h^k) \qquad \forall v\\
		\mu^\dagger(\cdot|h^k) &= \mu^*(\cdot|h^k)
		\end{array}$
		
		\item $\forall k$, if $h^k \ni \text{Accept}$, set $\sigma_s^\dagger(p|h^k) = 
		1$ where $p$ is the smallest price for which the buyer accepted in history 
		$h^k$. 
		
		\item $\forall k$, if $h^k \ni \text{Accept}$, set 
		$\bigg\{ \begin{array}{ll}
		\sigma_b^\dagger(\text{Accept}|v,h^k) &= 1 \qquad \forall v \geq 
		\text{current-round-price}\\
		\sigma_b^\dagger(\text{Reject}|v,h^k) &= 1 \qquad \forall v <
		\text{current-round-price}
		\end{array}$
		
		\item $\forall k$, if $h^k \ni \text{Accept}$, set 
		$\bigg\{ \begin{array}{lll}
		\mu^\dagger(\cdot|h^k) &= \mu^*(\cdot|h^{k_1+1}) & \text{if } h^k = 
		\text{Reject}^{k_1}\text{Accept}^{k_2} \text{ for } k_1 \geq 0, k_2 > 0\\
		\mu^\dagger(\high|h^k) &= 1 & \text{if } h^k = 
		\text{Reject}^{k_1}\text{Accept}^{k_2}\text{Reject}^{1}\{\text{Reject,Accept}\}^*
		
		\text{ for } k_1\geq 0, k_2 > 0 
		\end{array}$
	\end{enumerate}
	As described above, the PBE $(\sigma_s^\dagger,\sigma_b^\dagger, 
	\mu^\dagger)$ mimics the PBE $(\sigma_s^*,\sigma_b^*, 
	\mu^*)$ of the bargaining game as long as 
	the buyer has never purchased in history. If the buyer has purchased at least 
	once in the past, the bargaining game ends immediately, but the repeated sales 
	game doesn't. In this case, in the repeated sales game:
	\begin{enumerate}
		\item The seller is bound not to increase price beyond any
		price at which the buyer accepted in the past, and the seller has no reason to
		decrease the price either: so $\sigma_s^\dagger$ posts exactly the smallest
		price accepted in the past, i.e., the ``price remains fixed upon acceptance by
		the buyer'' just like the discussion in the proof of
		Theorem~\ref{thm:partialCommitmentPBEExistence}.
		\item The buyer is guaranteed to face prices lower than what he has once
		accepted, and therefore accepts any price. The condition ``$v <$
		current-round-price'' can never occur in equilibrium-path if buyer has accepted 
		at 
		least once in the past: as that would mean that
		in the past the buyer accepted in a round $r$ when $v <$ round-$r$-price.
		\item As far as beliefs are concerned, the seller never expects to see a buyer 
		who has once accepted in the past to ever reject in the future. If the seller 
		observes this, it is a zero-probability action, and beliefs are (as they can 
		be) arbitrarily updated to a point-mass at the largest point in the support 
		$\high$ (i.e., the belief is supported at $\high$ with probability $1$). If the 
		seller observes no 
		such anomaly, the seller's beliefs are simply borrowed from bargaining game. 
	\end{enumerate}
	The proof for $(\sigma_s^\dagger,\sigma_b^\dagger, 
	\mu^\dagger)$ being a PBE in the repeated sales game follows from 
	$(\sigma_s^*,\sigma_b^*, \mu^*)$ being a 
	PBE in the bargaining game. Let $U_{B,s}, U_{B,b}$ be the seller's and buyer's 
	utility in the bargaining game, and let $U_{R,s}, U_{R,b}$ be the same in the 
	repeated sales game. From the description above, it follows that:
	\begin{enumerate}
		\item $U_{R,s}(\sigma_s^\dagger,\sigma_b^\dagger, 
		\mu^\dagger) = U_{B,s}(\sigma_s^*,\sigma_b^*, 
		\mu^*)/\delta$
		\item $U_{R,b}(\sigma_s^\dagger,\sigma_b^\dagger, 
		\mu^\dagger) = U_{B,b}(\sigma_s^*,\sigma_b^*, 
		\mu^*)/\delta$
	\end{enumerate}
	I.e., the buyer, whenever he makes a single purchase at price $p$ in the 
	bargaining game, he makes infinite purchases at the same price, giving the 
	seller a revenue of $\sum_{r=0}^{\infty}p\cdot(1-\delta)^r = p/\delta$. 
	Similarly, the buyer's utility is also scaled by $\delta$. 
	
	Given that the utility structure of the two games are identical (scaling 
	doesn't 
	change the utility structure), it is immediate that 
	$(\sigma_s^\dagger,\sigma_b^\dagger, 
	\mu^\dagger)$ is a PBE for the repeated sales game.
\end{proof}
\begin{remark}[Durable goods monopoly]
	An identical connection as in Theorem~\ref{thm:BargainingRepeated} exists 
	between durable goods monopoly (\citet{Coase1972}) and the partial commitment 
	repeated sales game, given that the math for bargaining and durable goods 
	monopoly is identical.  
\end{remark}

\begin{remark}
	Note that the partial commitment repeated sales game in the finite horizon 
	setting, discussed in Section~\ref{sec:coupons}, cannot be reduced to the 
	bargaining game because, the utility structures are not identical: the number 
	of repeated purchases of an item depends on which round it was first 
	purchased. On the other hand in the infinite horizon setting, the ``number'' of 
	repeated purchases always appears as a $\frac{1}{\delta}$ factor facilitating 
	the reduction to the bargaining setting. 
\end{remark}

\subsection{The linear demand case: $U[0,1]$ distribution}
\label{sec:linearDemand}
For the rest of this section, we focus on the linear demand case, namely, the 
distribution of buyer's values is $U[0,1]$. 

\paragraph{Scale-invariant strategies for seller.} A seller's strategy is 
scale-invariant, if given two different prior beliefs 
that are scaled versions of one another, the seller's prices given these two 
beliefs are also scaled versions of one another, with the same 
ratio as prior beliefs. Concretely, for the uniform distribution, this means 
that when the belief is $U[0,k]$, there is some constant $\gamma$ independent 
of $k$ such that for all $k$, the seller's price is $\gamma k$. 

\paragraph{Stationary threshold strategies for buyer.} A buyer follows 
a stationary threshold strategy if, given a price $p$, buyers with $v 
\geq \lambda p$ purchase, and the rest reject, where $\lambda$ is independent 
of $p$ or the round in which it was offered. 

\subsubsection{Equilibrium selection and Coase conjecture}
\label{sec:eq_select}
As discussed in related work (Section~\ref{sec:related}), prior work
(\citep{ST83,Stokey81,Gul1986}) confirmed the Coase conjecture in the infinite
horizon bargaining/durable goods monopoly setting, when we focus on PBEs where
the buyer follows stationary threshold strategies, and the seller follows 
scale-invariant
strategies. By the correspondence between bargaining and partial commitment
repeated sales that we established in Theorem~\ref{thm:BargainingRepeated}, the
PBEs in these models have an equivalent in the partial commitment game, with the
seller's revenue being a factor $\frac{1}{\delta}$ larger than in bargaining.
For the sake of completeness, we work out that PBE here, adapted to our repeated
sales setting. We follow notation much along the lines of Chapter 10
in~\cite{FT91}.

\begin{theorem}[\citet{ST83}, \citet{Stokey81}]
	\label{thm:eq_select}
	In the time-discounted infinite horizon partial commitment repeated sales game 
	for the $U[0,1]$ 
	distribution, there exists a PBE in which the seller follows a scale-invariant 
	strategy, and the buyer follows a stationary threshold 
	strategy. As $\delta \to 0$, the seller's revenue approaches 
	$\frac{1}{2\sqrt{\delta}}$. 
\end{theorem}
\begin{proof}
	We look for a PBE where:
	\begin{enumerate}
		\item The seller follows a \emph{scale-invariant strategy}: i.e., when his 
		beliefs are in $U[0,t]$, he posts a price of $\gamma t$ where $\gamma$ is 
		independent of $t$. 
		\item The buyer follows 
		a \emph{stationary threshold strategy}: if offered a price $p$, buyers with $v 
		\geq 
		\lambda p$ purchase, and the rest reject, where $\lambda$ is independent of $p$ 
		or the round in which it was offered. 
	\end{enumerate}
	Let $R_{\delta}(t)$ denote the seller's discounted future revenue, given that 
	his current beliefs are in $U[0,t]$, but the distribution has a total mass of 
	just $t$ instead of $1$. Since the seller maximizes his revenue we 
	have the following (note that once the buyer purchases at $p$ the future price 
	remains fixed at $p$, just like in 
	Theorems~\ref{thm:partialCommitmentPBEExistence} and~\ref{thm:recursion}).
	\begin{align}
	\label{eqn:FT1}
	R_\delta(t) = \max_p\{(t-\lambda p)\frac{p}{\delta} + 
	(1-\delta)R_\delta(\lambda p)\}
	\end{align}
	First order conditions applied to~\eqref{eqn:FT1}, differentiating w.r.t. $p$, 
	yields
	\begin{align}
	\label{eqn:FT2}
	\frac{1}{\delta}(t-2\lambda p) + (1-\delta)\lambda R'_\delta(\lambda p) = 0
	\end{align}
	The envelope theorem, when applied to~\eqref{eqn:FT1} yields
	\begin{align}
	\label{eqn:FT3}
	R'_\delta(t) = \frac{p(t)}{\delta} = \frac{\gamma t}{\delta}
	\end{align}
	The indifference of threshold buyer yields
	\begin{align}
	\label{eqn:FT4}
	\frac{\lambda p - p}{\delta} = (1-\delta)\frac{\lambda p - \gamma\lambda 
		p}{\delta}
	\end{align}
	Combining~\eqref{eqn:FT2},~\eqref{eqn:FT3},~\eqref{eqn:FT4} yields
	\begin{align*}
	\lambda &= \frac{1}{\sqrt{\delta}}\\
	\gamma &= \frac{\sqrt{\delta} - \delta}{1-\delta}
	\end{align*}
	The seller's revenue $R_\delta(t)$, is obtained by integrating~\eqref{eqn:FT3}, 
	namely $R_\delta(t) = \frac{\gamma t^2}{2\delta}$, thus the revenue for 
	$U[0,1]$ distribution is $\frac{\gamma}{2\delta}$. As $\delta \to 0$, the 
	revenue approaches $\frac{1}{2\sqrt{\delta}}$. 
\end{proof}
\begin{remark}
	Note that this PBE indeed satisfies Coase's conjecture. As offers are made in 
	very quick succession, namely when the discount factor $1-\delta$ approaches 
	$1$, the seller's first round price, $\gamma$, approaches $0$. The monopolist's 
	revenue approaches $0$ in the bargaining game ($\sqrt{\delta}$, to be precise), 
	and in the repeated sales game it 
	approaches $\frac{1}{\sqrt{\delta}}$. The latter is much smaller than the 
	Myerson optimal static revenue of $\frac{1}{4\delta}$. 
\end{remark}
\subsubsection{Folk theorem}
\label{sec:folk}
While the selection criteria of scale-invariant strategies for
seller and stationary threshold policies for buyer helped to confirm Coase
conjecture, if we remove these stationarity assumptions,~\citet{AD89} showed a
folk theorem that as $\delta \to 0$, any revenue between $0$ and monopoly
revenue of $\frac{1}{4}$ can be obtained in a PBE for the infinite horizon time
discounted bargaining/durable goods monopolist game. By our
Theorem~\ref{thm:BargainingRepeated}, this immediately translates to a folk
theorem in the partial commitment repeated sales game, with any seller revenue
between $0$ and $\frac{1}{4\delta}$ possible. We state the folk theorem adapted
to our repeated sales game here.

\begin{theorem}[\citet{AD89}]
	\label{thm:folk}
	In the time discounted infinite horizon partial commitment repeated sales game 
	for the $U[0,1]$ distribution, 
	for every $\epsilon > 0$ there exists a $\delta(\epsilon) > 0$ such that for 
	any $\delta < \delta(\epsilon)$, the seller's revenue $R_\delta \in 
	[\frac{\epsilon}{\delta}, \frac{1/4-\epsilon}{\delta}]$.
\end{theorem}

\paragraph{Proof idea.} We refer the reader to~\citet{AD89} or Chapter 10
of~\citet{FT91} for a proof of this theorem. The main idea is to construct a 
reputational equilibrium. Let $\Delta$ denote the time interval between rounds, 
and let $r$ be the interest rate, so that the discount factor $1-\delta = 
e^{-r\Delta}$. Consider the real-time price path of $p_\tau = 
\frac{1}{2}e^{-\eta\tau}$ where $\tau$ is real time, and $p_\tau$ is the price 
posted by seller at $\tau$. This, price posted at round $n$ is $p_n = 
\frac{1}{2}e^{{-\eta n \Delta}}$. When $\eta$ is very close to $0$, the prices 
are extremely slowly decreasing, and most buyers will be impatient enough to 
buy very early. In particular, all buyers with value at least 
$\frac{1}{2}+\epsilon$ for small $\epsilon$ buy because $v-\frac{1}{2} = 
\sup_{\tau}e^{-r\tau}(v-\frac{1}{2}e^{-\eta\tau})$. This means that the seller 
makes almost his monopoly revenue in this equilibrium. But what's the catch? It 
is that the seller should not feel tempted at a later point in the path to 
decrease prices quicker than the announced path to capture more buyers --- if 
the seller ever deviates, the PBE constructed is such that the buyer will 
immediately switch to the ``Coase path'' 
discussed in Theorem~\ref{thm:eq_select}. The question is if the seller will 
ever deviate? In particular, given that $\eta$ is very small, leading to a very 
slowly decreasing price path, and consequently a very slow rate of sales as 
time proceeds making him want to switch to the Coase path. This would make the 
buyers want to wait till this switch to very low prices happen.  To ensure that 
this deviation does not happen, $\eta$ is fixed and $\Delta$ is taken to $0$, 
and a simple calculation is used to show that the seller doesn't have the 
incentive to deviate from the exponential price path. 

Qualitatively, this is a reputational equilibrium, because the seller is forced 
to maintain his reputation as someone who is determined not to reduce prices 
too quickly, and always follow the same rate parameter $\eta$. The moment the 
seller deviates from this path, the buyer decides that this seller is not 
strong and that the seller believes in the Coase conjecture, and hence switches 
to the ``Coase path'' with tiny seller revenue. 

\subsubsection{Exogenous restrictions on strategy space}
\label{sec:exogeneous}
For the linear demand case (i.e., $U[0,1]$ distribution), 
\citet{ST83},~\citet{Stokey81} focus on the PBEs where the seller follows 
scale-invariant strategies. Instead of searching for PBEs that 
satisfy certain properties (like we did in Section~\ref{sec:eq_select} where 
seller's strategies are scale-invariant and buyer has stationary threshold 
strategies), what if we exogenously enforce that the seller 
follows scale-invariant strategies? I.e., seller's strategy 
space is restricted to scale-invariant strategies that post a price of $pk$ 
when his belief is $U[0,k]$ for all $k$, with $p$ being independent of $k$. In 
other words, once the seller announces his initial price of $p$ for the 
$U[0,1]$ belief, he is immediately committing himself to the price for the 
entire rejection path (namely, just look at the current belief $U[0,k]$ and 
post a price of $pk$). 
With this exogenous enforcement, we now show that there is a unique PBE in 
which the seller can extract at least  $\frac{4}{3+2\sqrt{2}} \sim 69\%$ of 
Myerson optimal revenue benchmark, unlike the $O(\frac{1}{\sqrt{\delta}})$ 
predicted by the Coase-conjecture-confirming Theorem~\ref{thm:eq_select} 
(see Theorem~\ref{thm:infinitepartial} and the following discussion for this 
result). Further, the price in the first round as $\delta \to 0$ is about 
$0.586$, unlike the $\sqrt{\delta}$ as the first round price predicted by 
Theorem~\ref{thm:eq_select}.

\begin{theorem}\label{thm:infinitepartial}
	In the time-discounted infinite horizon partial commitment repeated sales game 
	for the $U[0,1]$ distribution, when the seller's strategy space is restricted 
	to scale-invariant strategies, there exists a unique\footnote{
		as in Theorem~\ref{thm:recursion}, the uniqueness refers to equilibrium path 
		and ignores irrelevant multiplicities arising out of several belief updates 
		that are allowed after zero probability actions by the buyer in off-equilibrium 
		paths. Further, we also ignore PBEs (if at all any exist) where a buyer rejects 
		a price that is smaller than the smallest point in the support of seller's 
		belief. It is straightforward to show by backward induction that indeed no such 
		PBEs exist in the finite horizon model, but it is not clear how to rule-out the 
		existence of such PBEs in infinite horizon model.
	}
	PBE in which, as $\delta 
	\to 0$, the seller's revenue approaches a
	$\frac{4}{3+2\sqrt{2}}$ fraction of the
	Myerson optimal revenue benchmark of $\frac{1}{4\delta}$.
\end{theorem}
\begin{proof}
	Let $R_\delta$ denote the expected revenue. Let $t(p)$ be the buyer's threshold 
	(at times abbreviated to just $t$) 
	when the seller posts a price of $p$. Indifference of the threshold buyer yields
	\begin{align}
	\label{eqn:tOfp}
	\frac{t-p}{\delta} &= (1-\delta)\frac{t-tp}{\delta}\nonumber\\
	t(p)&= \frac{p}{\delta+(1-\delta)p}
	\end{align}
	
	Since the seller is committed to following scale-invariant strategies, the 
	revenue $R_{\delta,[0,t]}$ for $U[0,t]$ (with a full mass of $1$) is simply 
	$t\cdot R_\delta$. 
	The expected revenue $R_\delta$ of the seller can be written as:
	\begin{align}\label{eqn:RP}
	R_\delta &= \max_p\bigg\{(1-\delta)tR_{\delta,[0,t]} +
	(1-t)\cdot p/\delta\bigg\}\nonumber\\
	&= \max_p\bigg\{(1-\delta)\cdot R_\delta t^2 +
	(1-t)\cdot p/\delta\bigg\}.
	\end{align}
	The first term of~\eqref{eqn:RP} is the expected revenue contribution from
	rejection in the first round: $t$ is the probability of rejection and the
	expected revenue upon rejection is $R_{\delta,[0,t]}$. The second term
	of~\eqref{eqn:RP} is the expected revenue contribution from acceptance in the
	first round: $1-t$ is the probability of acceptance in the first round, and the
	expected revenue upon acceptance is $p$ in every round appropriately 
	discounted. 
	
	Combining equations~\eqref{eqn:tOfp} and~\eqref{eqn:RP} gives
	\begin{align}\label{eqn:unmaxRP}
	R_\delta = \max_t\bigg\{\frac{t(1-t)}{(1-(1-\delta)t)(1-(1-\delta)t^2)}\bigg\}
	\end{align}
	
	First order condition on~\eqref{eqn:unmaxRP} gives:
	\begin{align}
	\label{eqn:fourthDegree}
	(1-\delta)^2t^4 - 2(1-\delta)^2t^3 +2(1-\delta)t^2-2t+1 = 0
	\end{align}
	Substitute $t=1-\theta$ in~\eqref{eqn:fourthDegree}. A bit of introspection 
	hints that $\theta = \Theta(\delta)$. Hence expand 
	equation~\eqref{eqn:fourthDegree} in $\theta, \delta$, 
	ignoring third 
	and higher order terms (i.e., retaining only constants, 
	$\theta,\delta,\theta^2,\delta^2,\theta\delta$ terms), and this gives that as 
	$\delta \to 0$, we have $\theta \sim
	\frac{\delta}{\sqrt{2}}$. 
	To compute the revenue approximtion, we substitute $t = 
	1-\frac{\delta}{\sqrt{2}}$ in~\eqref{eqn:unmaxRP} and take the ratio of the 
	expression for $R_\delta$ and the Myerson optimal revenue benchmark 
	of $\frac{1}{4\delta}$. As $\delta\to 0$, this ratio approaches from above 
	$\frac{4}{3+2\sqrt{2}}$, which is approximately $0.69$. The optimal 
	price $p$ approaches $\frac{\sqrt{2}}{\sqrt{2}+1}$. 
	
	\paragraph{Uniqueness.} This follows immediately from the uniqueness of the 
	optimization problem in~\eqref{eqn:unmaxRP} for $t \in [0,1]$.
\end{proof}
\paragraph{Discussion.} We remark a few properties of the PBE in 
Theorem~\ref{thm:infinitepartial}:
\begin{enumerate}
	\item The optimal first round price as $\delta \to 0$ is $p = 
	\frac{\sqrt{2}}{\sqrt{2}+1} 
	\approxeq 0.586$. This is very different from the tiny 
	$\sqrt{\delta}$ price found in Theorem~\ref{thm:eq_select}. So is the revenue 
	approximation factor of $0.69$, as compared to the tiny $\Theta(\sqrt{\delta}) 
	$ approximation in Theorem~\ref{thm:eq_select}. 
	\item Equation~\eqref{eqn:tOfp} says that for all $p < 1$, we have $t(p) < 1$, 
	i.e., however high the price $p$ in the first round is, there is always a small 
	fraction of buyer population that wants to buy very early. This again is very 
	different from the result in Theorem~\ref{thm:eq_select} which, for a first 
	round price of $p$ has the first round threshold $\lambda p = 
	\frac{1}{\sqrt{\delta}} p $, indicating that any first round price $p > 
	\sqrt{\delta}$ is 
	rejected by \emph{all} buyers. 
\end{enumerate}

\section{Directions for further research}
\label{sec:conc}
The basic posted-price-for-a-single-buyer setup considered in this paper is a 
special case of the $n$ buyers setting where one could run auctions with 
(potentially personalized) reserve prices, as is the common practice with most 
ad exchanges in the market for display/banner ads. Suppose there are 
$n$ buyers 
with values drawn from independent but not necessarily identical distributions, 
and suppose the seller sells an item in each round via an auction with 
(personalized) reserve prices. Given the prevalence of this auction format, 
understanding the equilibrium structure and revenue in this case is an 
interesting direction for further research.

\section*{Acknowledgements}
We are grateful to Tanmoy Chakraborty and Yashodhan Kanoria for
helpful discussions on the problem. We thank Anna Karlin for reading
earlier drafts of this paper and offering useful feedback.

\bibliographystyle{plainnat}
\bibliography{sequential}
\appendix
\section{A gentle introduction to Perfect Bayesian Equilibrium}
\label{app:intro_details}
\paragraph{The fishmonger's problem.\protect\footnote{We thank Amos Fiat for
		suggesting this name for the problem.}} There is a single seller of fish and a
single buyer who enjoys consuming a fresh fish every day. The buyer has a private
value $v$ for each day's fish, drawn from a publicly known distribution.
However, this value is drawn {\em only once}, i.e., the buyer has the same
unknown value on all days.  Each day, the seller sets a price for that day's
fish, which of course can depend on what happened on previous days. The buyer
can then decide whether to buy a fish at that price or to reject. The goal  of
the buyer is to maximize his total utility (his value minus price on each day
he buys and 0 on other days), and the goal of the seller is to maximize
profit.  How much money can the seller make in $n$ days in equilibrium?

Consider, for example, the case where the distribution of the buyer's value is
uniform in $[0,1]$ (denoted by $U[0,1]$ for short), and the game lasts for one day.
In this case, it is easy to see that the optimal seller price is the monopoly
price\footnote{This is a special case of~\cite{M81}'s theorem which implies that
	the revenue optimal mechanism for a seller facing a buyer with value
	drawn from known distribution $F$ is to offer a price of $p$ (monopoly price) that
	maximizes $p(1-F(p))$.} of 1/2, resulting in an expected seller profit of
1/4.

What prices should the seller set if the game is to last for two days?  A first
guess is 1/2 on both days, for an expected profit of 1/4 each day or 1/2
overall. But this is implausible: if the buyer rejects on the first day, the
seller might reasonably assume that the buyer's value is $U[0,1/2]$, in which
case the seller's best response is to offer a price of 1/4 on the second day.
This yields the seller strategy shown in Figure~\ref{fig:PBE1}.  However, this
buyer/seller strategy pair is {\em not} in equilibrium.  This seller strategy
is based on the fallacious assumption that the buyer's best response is to buy
on both days if his value is above 1/2. Indeed, a buyer with value $1/2 +
\epsilon$ gets a utility of $2\epsilon$ for buying both days, whereas his
utility is $1/4 + \epsilon$ if he only buys on the second day. Interestingly,
if the buyer could be {\em guaranteed} that the price on the second day was
1/2, then his best response would be to buy both days when $v > 1/2$.
However, since the seller is {\em unable to commit} to a second day price, the
buyer's strategy on the first day must take into account that on the second day
the seller will best respond to the buyer's first day strategy.  The result, in
this case, is that the buyer is incentivized to wait for the lower second day
price unless his value is at least 3/4.

\begin{figure}[h]
	\centering
	\begin{subfigure}[b]{.4\textwidth}
		\includegraphics[width=0.7\textwidth]{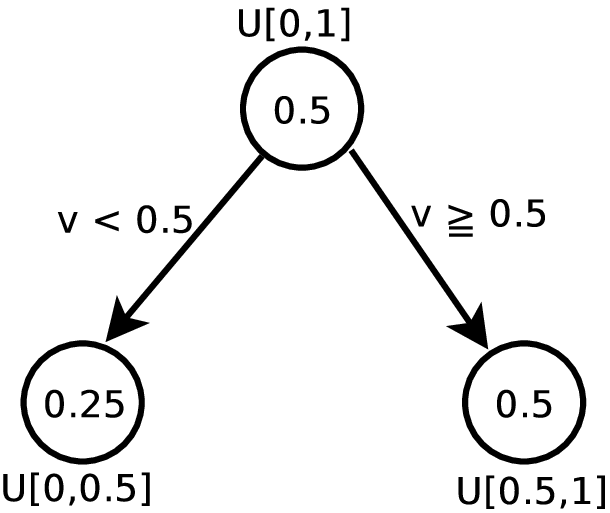}
		\caption{This is {\em not} a (perfect Bayesian) equilibrium\qquad\qquad\qquad\qquad\qquad\qquad}
		\label{fig:PBE1}
	\end{subfigure}
	\qquad\qquad
	\begin{subfigure}[b]{0.4\textwidth}
		\includegraphics[width=0.7\textwidth]{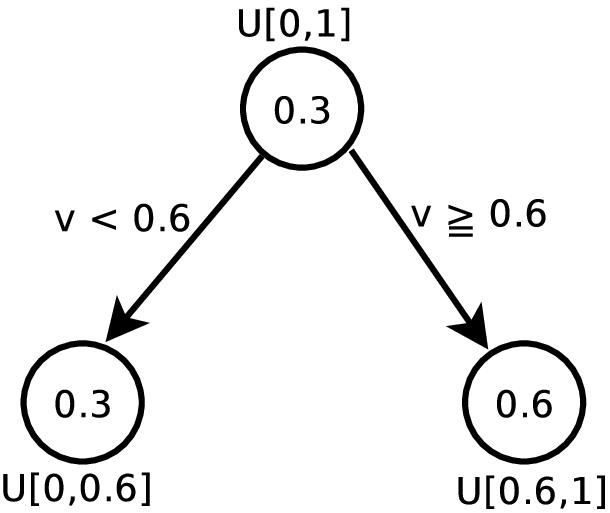}
		\caption{This is a valid (perfect Bayesian)
			equilibrium\qquad\qquad\qquad\qquad}
		\label{fig:PBE2}
	\end{subfigure}
	\caption{Equilibrium illustration for the $2$ days fishmonger's problem. The
		number in the top circle is the price on the first day. The number in the
		circle following the left arrow denotes the price on the second day after the
		buyer rejected on the first day, and the one in the circle following the right
		arrow denotes the second day price after buyer buys at the posted price on the
		first day. The distributions are updated depending on whether the buyer
		purchased or rejected on the first day.}
	\label{fig:PBE}
\end{figure}

So what is the optimal strategy for this 2-day game (for arbitrary
distributions of buyer valuation)? Or for the $n$-day version? In this paper we
study this question: how much can the seller make in $n$-days in a {\em Perfect
	Bayesian Equilibrium (PBE)}?  

\paragraph{Commitment and Perfect Bayesian Equilibrium.} The lack of commitment
in repeated games is the major driver of fundamental differences in outcomes
when compared to single-shot games. The absence of commitment in repeated games
is captured by the notion of a Perfect Bayesian equilibrium (PBE).  Informally,
a PBE consists of a seller strategy, describing what price he offers as a
function of the history of play at each time, and a (possibly randomized) buyer
strategy describing his accept/reject decisions given the history of play and
his value. For every possible value the buyer has, and for every possible
history of play, his strategy must be a best response to the {\em subtree} of
prices the seller's strategy specifies for that particular history of play. For the seller,
for every possible history of play, the subtree of prices offered henceforth
must optimize his profit given the buyer's strategy and the induced
distribution of values the buyer has (as determined by the history of play).
For example, Figure~\ref{fig:PBE2} shows the essentially unique PBE strategies
for the buyer and the seller in the example discussed above. We refer the 
reader to Section~\ref{sec:zeroCommitment} for a general way to compute PBEs 
for two round
games with arbitrary distributions. For the sake of intuition, we also flesh
out in Appendix~\ref{app:twoRounds} the $U[0,1]$ case and provide a complete 
description of the PBE strategies. 
\section{Broader related work}
\label{app:broader_related_work}
\paragraph{Ratchet effect.}
A well known fact about many of these repeated sales settings is the ``ratchet 
effect'', that the
revelation principle fails to hold \citep{ FreixasGT1985,LaffontT1988}.  

\paragraph{Posting prices.} Most of the literature assumes that the seller is
{\em restricted} to posting a price, which was justified by
\citet{Skreta2006,Skreta2015} who showed (respectively for a single buyer and
many buyers) that posting prices is optimal among all mechanisms. Both these 
papers considered the case where the seller has a single unit. 

\paragraph{Other related work in bargaining.} While our closest connection in 
this work is to bargaining with one-sided offer 
and one-sided incomplete information, there is also work bargaining with 
two-sided incomplete information and two-sided offers (\citet{FudenbergT1983}). 
We refer 
the reader to~\citet{AusubelCD2002} for a survey on bargaining with incomplete 
information. On the topic of bargaining with complete information, 
~\citet{Stahl72} and~\citet{Rubinstein1982} were the first to capture the 
intrinsically dynamic nature of bargaining and model it as a sequential 
complete information game (as an alternative to the axiomatic approach to 
Bargaining \citep{NashBargaining}). They showed that the process of sequential 
bargaining
yields a unique, Pareto-efficient outcome. While the combination of uniqueness 
and efficient outcome is remarkable and important, we note that bargaining 
gains its main 
interest from incomplete information.
\newcommand{\Agents}{\mathcal{A}} 
\newcommand{\Outcomes}{\mathcal{O}}
\newcommand{\Types}{\mathcal{T}}  
\newcommand{\Reals}{\mathbb{R}} 
\newcommand{\type}{\theta}
\newcommand{\obj}{\mbox{\sc obj}}
\newcommand{\thetas}{\vec{\theta}}
\section{Lack of commitment can never help even in very general settings}
\label{app:Myerson_optimal}
We formally define a very general model of mechanism design here and prove
Proposition~\ref{thm:MyeOPT}.
\begin{definition} {\textbf{General model of mechanism design}.}
	An instance of a mechanism design problem is given by a set of $m$ {\em agents}
	$\Agents$, a set of {\em outcomes} $\Outcomes$ and a {\em type space} $\Types$
	for each agent, where each type $\type$ is a function from  $\Outcomes$ to
	$\Reals$ (which is the {\em utility} of the agent with type $\theta$ for the
	given outcome).  Each agent $a$ has a type $\type_a$, which is her private
	information.  In the Bayesian setting, additionally, we are given a joint
	probability distribution $\dist$ over the types of all the agents,
	$\Types^m$, from which the type vector $\vec{\theta}$ of the agents is sampled.
	A {\em mechanism}  is a multi-party protocol in which the agents participate,
	as a result of which there is an outcome $o(\thetas)$.  The mechanism
	designer's goal is to maximize his {\em objective} $\Ex[\thetas\sim
	F]{\obj(\thetas)}$, where $\obj(\cdot)$ is a function from  $\Outcomes$ to
	$\Reals$. 
\end{definition} 

Note that the above definition includes, in addition to the usual cases of
welfare/revenue maximization, constraints such as budget constraints and
scenarios such as mechanism design without money, non-linear objectives such as
makespan minimization in scheduling and max-min fairness. In order to extend
this model to the repeated setting, we need to additionally specify how many
times the setting is repeated.  We allow the number of repetitions to be a
random variable.

\newcommand{\Naturals}{\mathbb{N}}
\begin{definition} {\textbf{General model of repeated mechanism design.}}
	\label{def:GRMD} 
	An instance of a repeated mechanism design problem is given by an instance of
	the mechanism design problem, the probabilities $\{ q_t\}_{t \in \Naturals}$
	with which the $t$-th repetition is realized, the fractions $\{d_t\}_{t \in \Naturals}$ 
	with which the mechanism designer and the agents discount their $t$-th round utilities.
	We require that $\sum_{t=1}^{\infty} q_t d_t < \infty$ (i.e., the process either doesn't
	continue infinitely, or if it does, agents discount their future utilities
	enough to avoid infinite utilities).  The buyer types remain the same in every
	repetition and there are no inter-round constraints except this. The repeated
	mechanism is now a protocol, which in sequence produces an outcome  $o_t$ for
	each time\footnote{The mechanism could be randomized and its outcome on day
		$t$ could depend (apart from $\thetas$) on the realization of the random coin
		tosses on days $1$ to $t-1$. To avoid excessively cumbersome notation we avoid
		spelling this out formally. But our argument and results directly extend to
		these settings too.} $t$ till the process stops (the mechanism designer and the
	agents know the probabilities $q_t$ that determine this stopping time, but get
	to know the precise stopping time only when it happens.) The utility of agent
	$a$ is the sum  $\Ex[\thetas\sim F|\theta_a]{\sum_{t=1}^\infty q_t d_{t}
		\theta_a(o_t(\thetas))} $.  The objective of the mechanism designer is
	$\Ex[\thetas\sim F]{\sum_{t=1}^\infty q_t d_{t} \obj(o_t(\thetas))}$. 
\end{definition} 

It is easy to see that the game defined in Definition \prettyref{def:basicgame}
is a special case of the above model: $\Agents = \{ 1 \}$, $\Outcomes =
\{\text{accept, reject}\} \times \Reals$, types of the form
$\theta((\text{accept},p))  = v - p $  for some $v \in \Reals$ and $\theta(o) =
0 $ otherwise, and objective $\obj( (\text{accept},p) ) = p$ and $\obj(o) =0$
otherwise. For the finite horizon model $q_t =1$ for $t \in \{1,2,\dots,n\}$,
and $q_t = 0$ for $t > n$, with $d_{t} = 1$ for all $t$. The time
discounted infinite horizon game can be described by setting $q_t = 1$ for all $t \in
\Naturals$ and $d_{t} = (1-\delta)^{t-1}$ for all $t \in
\Naturals$.  Note that $\sum_{t=1}^{\infty} q_td_{t} = 1/\delta < \infty$. 

We restate Proposition~\ref{thm:MyeOPT} formally here and prove it. 

\begin{oneshot}{Proposition \ref{thm:MyeOPT}} (A simple generalization of the 
	result 
	in~\cite{BB84})
	In the general model of repeated mechanism design, the optimal objective value
	obtained without any commitment is never larger than the optimal objective
	value obtained when commitment is possible. Formally let $\obj^*$ be the
	optimal expected objective value for the single round mechanism design problem. Then the
	optimal expected objective value attainable in any PBE in the repeated mechanism design problem is
	at most $\Ex[\thetas\sim F]{\sum_{t=1}^{\infty} q_td_{t}\obj^*}$. 
\end{oneshot}
\begin{proof}
	Suppose on the contrary that there was a PBE with expected objective value
	$\Ex[\thetas\sim F]{\sum_{t=1}^{\infty} q_td_{t}\obj(o_t(\thetas))} > \Ex[\thetas\sim F]{\sum_{t=1}^{\infty} q_td_{t}\obj^*}$.
	Consider the following mechanism for the single round game. All agents submit
	their types to the mechanism designer.  The designer chooses day $t$ with
	probability $\frac{q_td_{t}}{\sum_t q_td_{t}}$ and runs the said PBE till
	day $t$, and the outcome on day $t$ will be the outcome realized for the single
	round game. Agent $a$ with type $\theta_a$, upon truthful reporting of his type,
	will get an expected utility of $\Ex[\thetas\sim F|\theta_a]{\sum_{t=1}^\infty \frac{q_t d_{t}}{\sum_{t}q_td_t}
		\theta_a(o_t(\thetas))} $ which is just a scaled version of his utility in the
	PBE. Thus the agent has no incentive to deviate in the proposed mechanism 
	because that
	would mean that the said PBE was not really a PBE. For this mechanism, 
	the expected objective of the designer is 
	$\Ex[\thetas\sim F]{\sum_{t=1}^\infty \frac{q_t d_{t}}{\sum_t q_td_t} \obj(o_t(\thetas))}$. By our assumption,
	the former quantity is at least $\Ex[\thetas\sim F]{\sum_{t=1}^\infty \frac{q_t d_{t}}{\sum_t q_td_t} \obj^*} > \obj^*$. 
	This is a contradiction because in the single round game, it is not possible to
	get an expected objective value higher than $\obj^*$. 
\end{proof}
\section{Two Rounds Game}
\label{app:twoRounds}
\paragraph{Full solution to the $2$ days $U[0,1]$ repeated sales game.} 
We present here the full solution to the $2$ days $U[0,1]$ repeated sales game. 
Note that a PBE has to specify equilibrium behavior at off equilibrium 
paths as well. In the $U[0,1]$ example, although the first round price is $p_1 
= 0.3$, the seller still has to specify his second round behavior when the 
first round 
price is not $0.3$. 
\renewcommand{\algorithmcfname}{STRATEGY}
\IncMargin{1em}
\begin{algorithm}[!h]
	\SetKwData{Left}{left}\SetKwData{This}{this}\SetKwData{Up}{up}
	\SetKwFunction{Union}{Union}\SetKwFunction{FindCompress}{FindCompress}
	\SetKwInOut{Input}{Input}\SetKwInOut{Output}{Output}
	\textbf{\underline{Round-$1$ pricing:}} Set $p_1 = 0.3$\;
	\textbf{\underline{Round-$2$ pricing:}} \\
	\uIf{$p_1 \leq 0.5$}
	{
		\uIf{\text{Buyer rejects in round $1$}}
		{
			Set second round price of $p_2 = p_1$
		}
		\ElseIf{\text{Buyer accepts in round $1$}}
		{
			Set second round price of $\max(2p_1,0.5)$
		}
	}
	\ElseIf{$p_1 > 0.5$}
	{
		\uIf{\text{Buyer rejects in round $1$}}
		{
			Set second round price of $p_2 = 0.5$
		}
		\ElseIf{\text{Buyer accepts in round $1$}}
		{
			Set second round price of $1$
		}
	}
	\caption{Seller's strategy in the $2$ rounds \text{U[0,1]} game}
	\label{strat:SellerTwoRoundsUniform}
\end{algorithm}
\DecMargin{1em}
\IncMargin{1em}
\begin{algorithm}[!h]
	\SetKwData{Left}{left}\SetKwData{This}{this}\SetKwData{Up}{up}
	\SetKwFunction{Union}{Union}\SetKwFunction{FindCompress}{FindCompress}
	\SetKwInOut{Input}{Input}\SetKwInOut{Output}{Output}
	\textbf{\underline{Round-$1$ strategy:}}\\
	\uIf{$p_1 \leq 0.5$}
	{
		\uIf{$v \geq 2p_1$}
		{
			Accept
		}
		\Else
		{
			Reject
		}
	}
	\ElseIf{$p_1 > 0.5$}
	{
		Reject
	}
	\textbf{\underline{Round-$2$ strategy:}} \\
	\uIf{$v \geq p_2$}
	{
		Accept
	}
	\Else
	{
		Reject
	}
	\caption{Buyer's strategy in the $2$ rounds \text{U[0,1]} game}
	\label{strat:BuyerTwoRoundsUniform}
\end{algorithm}
\DecMargin{1em}
\end{document}